\documentclass[11pt]{article}
\usepackage{amsmath,amsthm,amssymb, fullpage,  graphicx}
\usepackage[ruled]{algorithm2e}


\topmargin=.3cm

\oddsidemargin=-0.5cm

\evensidemargin=-0.5cm

\textwidth=17.1cm

\textheight=22.2cm

\def\comment#1{}

\newtheorem{definition}{\bf Definition}[section]
\newtheorem{example}{\bf Example}[section]
\newtheorem{theorem}{\bf Theorem}[section]
\newtheorem{lemma}{\bf Lemma}[section]

\newcommand{\qual}{q}
\newcommand{\E}{\mathrm{E}}
\newcommand{\Ev} {\E_{\vec{v}}}

\newcommand{\lv} {\underline{v}}

\newcommand{\vi}  {v_{-i}}

\title{Pricing Ad Slots with Consecutive Multi-unit Demand \\[0.1in]}

\author{
Xiaotie Deng\thanks{Department of Computer Science, Shanghai Jiaotong University, China.
Email: {\tt dengxiaotie@gmail.com}
Supported by the National Science Foundation of China (Grant No. 61173011) and a Project 985 grant of Shanghai Jiaotong University
} \\
\and
Paul W. Goldberg\thanks{Department of Computer Science, Oxford University, UK. Email: {\tt Paul.Goldberg@cs.ox.ac.uk} Supported by EPSRC project EP/K01000X/1 ``Efficient Algorithms for Mechanism Design without Monetary Transfer"}\\
\and
Yang Sun\thanks{Department of Computer Science, City University of Hong Kong, Hong Kong. Email: {\tt sunyang@live.hk}}\\
\and
Bo Tang\thanks{Department of Computer Science, University of Liverpool, UK. Email: {\tt Bo.Tang@liverpool.ac.uk}}\\
\and
Jinshan Zhang\thanks{Department of Computer Science, University of Liverpool, UK. Email: {\tt Jinshan.Zhang@liverpool.ac.uk} Supported by EPSRC project EP/K01000X/1 ``Efficient Algorithms for Mechanism Design without Monetary Transfer"}\\
}

\setcounter{page}{0}
\begin{document}

\maketitle
\thispagestyle{empty}

\begin{abstract}
We consider the optimal pricing problem for a model of the rich
media advertisement market, as well as other related applications.
In this market, there are multiple buyers (advertisers), and items (slots) that are
arranged in a line such as a banner on a website.  Each buyer
desires a particular number of {\em consecutive} slots and has a
per-unit-quality value $v_i$ (dependent on the ad only) while each
slot $j$ has a quality $q_j$ (dependent on the position only such as
click-through rate in position auctions).  Hence, the valuation of
the buyer $i$ for item $j$ is $v_iq_j$. We want to decide the
allocations and the prices in order to maximize the total revenue of
the market maker.

A key difference from the traditional position auction is the
advertiser's requirement of a fixed number of consecutive slots.
Consecutive slots may be needed for a large size rich media ad.
We study three major pricing mechanisms, the Bayesian
pricing model, the maximum revenue market equilibrium model and an envy-free solution model.
Under the Bayesian model, we design a polynomial time computable
truthful mechanism which is optimum in revenue.  For the market
equilibrium paradigm, we find a polynomial time algorithm to obtain
the maximum revenue market equilibrium solution. In envy-free settings,
an optimal solution is presented when the buyers have the same demand for the number of consecutive slots.
We conduct a simulation that compares the revenues from the above schemes
and gives convincing results.\\\\
\textbf{Keywords:}  {mechanism design, revenue, advertisement auction}
\end{abstract}

\newpage

\section{Introduction}

Ever since the pioneering studies on pricing protocols for sponsored search advertisement,
especially with the generalized second price auction (GSP), by Edelman, Ostrovsky, and
Schwarz~\cite{EOS07}, as well as Varian~\cite{Va07},
market making mechanisms have attracted
much attention from the research community in understanding
their effectiveness for the revenue maximization task facing
 platforms providing Internet advertisement services.
In the traditional advertisement setting, advertisers negotiate ad
presentations and prices with website publishers directly.
An automated pricing mechanism simplifies this process by creating a bidding game for the
buyers of advertisement space over an IT platform. It creates a complete competition environment for
the price discovery process. Accompanying the explosion of the online
advertisement business, there is a need to have a complete picture on
what pricing methods to use in practical terms for both advertisers and Ad space providers.

In addition to search advertisements, display advertisements have been widely used
in webpage advertisements. They have a rich format of displays such as text ads and rich media ads.
Unlike sponsored search, there is a lack of systematic studies on its working mechanisms
for decision makings.
The market maker faces a combinatorial problem of whether to assign a large space to one
large rich media ad or multiple small text ads, as well as how to decide on the prices charged
to them.
We present a study of the allocation and pricing mechanisms for displaying slots in
this environment where some buyers would like to have one slot and others may want several
consecutive slots in a display panel.
In addition to webpage ads, another motivation of our study is TV advertising where
inventories of a commercial break are usually divided into slots of
a few seconds each, and slots have various qualities measuring their
expected number of viewers and the corresponding attractiveness.

We discuss three types of mechanisms and consider the revenue maximization problem under
these mechanisms, and compare their effectiveness in revenue maximization under a dynamic
setting where buyers may change their bids to improve their utilities.
Our results make an important step toward the understanding of the advantages and
disadvantages of their uses in practice. Assume the ad supplier divides the ad
space into small enough slots (pieces) such that each advertiser is
interested in a position with a fixed number of $consecutive$ pieces.
In modelling values to the advertisers, we modify the position auction model
from the sponsored search market~\cite{EOS07,Va07} where each ad slot is measured by the Click
Through Rates (CTR), with users' interest expressed by a click on an ad.
Since display advertising is usually sold on a per impression
(CPM) basis instead of a per click basis (CTR), the quality factor of
an ad slot stands for the expected impression it will brings in unit of time.
Unlike in the traditional position auctions, people may have
varying demands (need different spaces to display their ads) in a rich
media ad auction for the market maker to decide on slot allocations and their prices.

We will lay out the the specific system parameters and present
our results in the following subsections.

\subsection{Our Modeling Approach}

We have a set of {\em buyers} (advertisers) and a set of {\em items} to be sold (the
ad slots on a web page). We address the challenge of computing prices that satisfy
certain desirable properties. Next we describe the elements of the model in more detail.

\begin{itemize}
\item[$\bullet$]{\bf Items.} Our model considers the geometric
  organization of ad slots, which commonly has the slots arranged
  in some sequence (typically, from top to bottom in the right-hand
  side of a web page). The slots are of variable quality. In the study
  of sponsored search auctions, a standard assumption is that the
  quality (corresponding to click-through rate) is highest at the
  beginning of the sequence and then monotonically decreases. Here we
  consider a generalization where the quality may go down and up,
  subject to a limit on the total number of local maxima (which we
  call {\em peaks}), corresponding to focal points on the web page. As
  we will show later, without this limit the revenue maximization
  problem is NP-hard.
\item[$\bullet$]{\bf Buyers.} A buyer (advertiser) may want to
  purchase multiple slots, so as to display a larger ad. Note that
  such slots should be {\em consecutive} in the sequence.  Thus, each
  buyer $i$ has a fixed {\em demand} $d_i$, which is the number of
  slots she needs for her ad. Two important aspects of this are
\begin{itemize}
\item[$\diamond$]{\em sharp} multi-unit demand, referring to the fact
  that buyer $i$ should be allocated $d_i$ items, or none at all;
  there is no point in allocating any fewer
\item[$\diamond$]{\em consecutiveness} of the allocated items, in the
  pre-existing sequence of items.
\end{itemize}
These constraints give rise to a new and interesting combinatorial
pricing problem.
\item[$\bullet$]{\bf Valuations.} We assume that each buyer $i$ has a
  parameter $v_i$ representing the value she assigns to a slot of unit
  quality. Valuations for multiple slots are additive, so that a buyer
  with demand $d_i$ would value a block of $d_i$ slots to be their
  total quality, multiplied by $v_i$.  This valuation model has been
  considered by Edelman et al.~\cite{EOS07} and Varian~\cite{Va07} in
  their seminal work for keywords advertising.
\end{itemize}

{\bf Pricing mechanisms.}
Given the valuations and demands from the buyers, the market maker
decides on a price vector for all slots and an allocation of slots to
buyers, as an output of the market. The question is one of which
output the market maker should choose to achieve certain objectives.
We consider two approaches:
\begin{itemize}
\item[$\bullet$]{\bf Truthful mechanism} whereby the buyers report their demands (publicly known) and values (private)
to the market maker; then prices are set in such a way as to ensure that the buyers
have the incentive to report their true valuations. We give a revenue-maximizing
approach (i.e., maximizing the total price paid), within this framework.
\item[$\bullet$]{\bf Competitive equilibrium} whereby we prescribe certain constraints on the
prices so as to guarantee certain well-known notions of fairness and envy-freeness.
\item[$\bullet$]{\bf Envy-free solution} whereby we prescribe certain constraints on the
prices and allocations so as to achieve envy-freeness, as explained below.
\end{itemize}

The mechanisms we exhibit are computationally efficient. We also
performed experiments to compare the revenues obtained from these three
mechanisms.

\subsection{Related Works}

The theoretical study of position auctions (of a single slot) under the
generalized second price auction was initiated in
\cite{EOS07,Va07}. There has been a series of studies of position
auctions in deterministic settings \cite{La06}. Our consideration of
position auctions in the Bayesian setting fits in the general one
dimensional auction design framework. Our study considers continuous
distributions on buyers' values. For discrete distributions,
\cite{CDW12} presents an optimal mechanism for budget constrained
buyers without demand constraints in multi-parameter settings and very
recently they also give a general reduction from revenue to welfare
maximization in \cite{CDW12-2}; for buyers with both budget
constraints and demand constraints, $2$-approximate mechanisms
\cite{Al11} and $4$-approximate mechanisms \cite{BGGM10} exist in the
literature.

There are extensive studies on multi-unit demand in economics, see for
example~\cite{AC96,CP07,EWK98}.  In an earlier paper~\cite{CDGZ12} we
considered sharp multi-unit demand, where a buyer with demand $d$
should be allocated $d$ items or none at all, but with no further
combinatorial constraint, such as the consecutiveness constraint that
we consider here.  The sharp demand setting is in contrast with a
``relaxed'' multi-unit demand (i.e., one can buy a subset of at most
$d$ items), where it is well known that the set of competitive
equilibrium prices is non-empty and forms a distributive
lattice~\cite{GS99,SS71}.  This immediately implies the existence of
an equilibrium with maximum possible prices; hence, revenue is
maximized. Demange, Gale, and Sotomayor~\cite{DGS86} proposed a
combinatorial dynamics which always converges to a revenue maximizing
(or minimizing) equilibrium for unit demand; their algorithm can be
easily generalized to relaxed multi-unit demand. A strongly related work
to our consecutive settings is the work of Rothkopf
et al.~\cite{RPH98}, where the authors presented a dynamic programming
approach to compute the maximum social welfare of consecutive settings
when all the qualities are the same. Hence, our dynamic programming
approach for general qualities in Bayesian settings is a non-trivial
generalization of their settings.

\subsection{Organization}

This paper is organized as follows. In Section~\ref{sec:pre} we
describe the details of our rich media ads model and the related
solution concepts.  In Section~\ref{sec:single}, we study the problem
under the Bayesian model and provide a Bayesian Incentive Compatible
auction with optimal expected revenue for the special case of the
single peak in quality values of advertisement positions. Then in
Section \ref{sec:general}, we extend the optimal auction to the case
with limited peaks/valleys and show that it is NP-hard to maximize
revenue without this limit. Next, in Section \ref{sec:eq}, we turn to
the full information setting and propose an algorithm to compute the
competitive equilibrium with maximum revenue. In Section \ref{sec:envy},
NP-hardness  of envy-freeness for consecutive multi-unit demand buyers is shown.
We also design a polynomial time solution for the special case where all advertisers
demand the same number of ad slots.
The simulation is presented in Section \ref{sec:simulation}.


\section{Preliminaries}
\label{sec:pre}
In our model, a rich media advertisement instance consists of $n$
advertisers and $m$ advertising slots.
Each slot $j\in \{ 1,\ldots, m\}$ is associated with a number $q_j$ which can
be viewed as the quality or the desirability of the slot. Each advertiser
(or buyer) $i$ wants to display her own ad that occupies $d_i$
consecutive slots on the webpage. In addition, each buyer has a
private number $v_i$ representing her valuation and thus, the $i$-th
buyer's value for item $j$ is $v_{ij}=v_iq_j$.

Throughout this thesis, we will often say that slot $j$ is assigned to
a buyer set $B$ to denote that $j$ is assigned to some buyer in
$B$. We will call the set of all slots assigned to $B$ the allocation
to $B$. In addition, a buyer will be called a winner if he succeeds in
displaying his ad and a loser otherwise. We use the standard
notation $[s]$ to denote the set of integers from $1$ to $s$,
i.e. $[s]=\{1,2,\ldots,s\}$. We sometimes use $\sum_i$ instead of
$\sum_{i\in[n]}$ to denote the summation over all buyers and $\sum_j$
instead of $\sum_{j\in[m]}$ for items, and the terms $\E_{\mathbf{v}}$ and
$\E_{\vi}$ are short for $\E_{\mathbf{v}\in \mathbf{V}}$ and $\E_{\vi\in V_{-i}}$.

The vector of all the buyers' values is denoted by $\vec{v}$ or
sometimes $(v_i;\vi)$ where $\vi$ is the joint bids of all bidders
other than $i$. We represent a feasible assignment by a vector
$\vec{x}=(x_{ij})_{i,j}$, where $x_{ij}\in\{0,1\}$ and
$x_{ij}=1$ denotes item $j$ is assigned to buyer
$i$. Thus we have $\sum_ix_{ij}\leq 1$ for every item $j$. Given a
fixed assignment $x$, we use $t_i$ to denote the
quality of items that buyer $i$ is assigned, precisely,
$t_i=\sum_jq_jx_{ij}$. In general, when $x$ is a function of buyers'
bids $\vec{v}$, we define $t_i$ to be a function of $\vec{v}$ such
that $t_i(\vec{v})=\sum_jq_jx_{ij}(\vec{v})$.

When we say that slot qualities have a single peak,
we mean that there exists a peak slot $k$ such that for any slot $j<k$ on the left
side of $k$, $q_j\ge q_{j-1}$ and for any slot $j>k$ on the right side
of $k$, $q_j\ge q_{j+1}$.
\subsection{Bayesian Mechanism Design}
Following the work of \cite{My81},
we assume that all buyers' values are distributed
independently according to publicly known bounded distributions.
The distribution of each buyer $i$ is represented by a Cumulative
Distribution Function (CDF) $F_i$ and a Probability Density Function
(PDF) $f_i$. In addition, we assume that the concave closure or convex
closure or integration of those functions can be computed efficiently.

An auction $M=(\vec{x},\vec{p})$ consists of an allocation function $\vec{x}$
and a payment function $\vec{p}$.
$\vec{x}$ specifies the allocation of items to buyers and $\vec{p} = (p_i)_i$
specifies the buyers' payments, where both $\vec{x}$ and $\vec{p}$ are functions of
the reported valuations $\vec{v}$. Our objective is to maximize the expected revenue of the
mechanism is $Rev(M)=\Ev\left[\sum_ip_i(\vec{v})\right]$ under Bayesian incentive
compatible mechanisms.
\begin{definition}
  A mechanism $M$ is called {\em Bayesian Incentive Compatible} (BIC)
  iff the following inequalities hold for all $i,v_i,v'_i$.
\begin{equation}
  \label{eq:BIC}
  \E_{\vi}[v_it_i(\vec{v})-p_i(\vec{v})] \geq \E_{\vi}\left[v_it_i(v'_i;\vi)-p_i(v'_i;\vi)\right]
\end{equation}
Besides, we say $M$ is {\em Incentive Compatible} if $M$ satisfies a
stronger condition that $v_it_i(\vec{v})-p_i(\vec{v})\geq
v_it_i(v'_i;\vi)-p_i(v'_i;\vi)$, for all $\vec{v}, i, v'_i$,
\end{definition}

To put it in words, in a BIC mechanism, no player can improve her
\emph{expected} utility (expectation taken over other players' bids)
by misreporting her value. An IC mechanism satisfies the stronger requirement
that no matter what the other players declare, no player has incentives to
deviate.

\subsection{Competitive Equilibrium and Envy-free Solution}

In Section \ref{sec:eq}, we study the revenue maximizing competitive
equilibrium and envy-free solution in the full information setting instead of the Bayesian setting.
An outcome  of the market is a pair $(\vec{X},\vec{p})$, where $\vec{X}$ specifies
an allocation of items to buyers and $\vec{p}$ specifies prices paid.
Given an outcome $(\vec{X},\vec{p})$, recall $v_{ij}=v_iq_j$, let
$u_i(\vec{X},\vec{p})$ denote the $utility$ of $i$.

\begin{definition}\label{Def-EF}
  A tuple $(\vec{X},\vec{p})$ is a {\em consecutive envy-free pricing} solution
  if every buyer is consecutive envy-free,  where a buyer $i$ is consecutive
  envy-free if the following conditions are satisfied:
\begin{itemize}
\item[$\bullet$]if $X_i\neq \emptyset$, then (i) $X_i$ is $d_i$ consecutive
  items.  $u_i(\vec{X},\vec{p})=\sum\limits_{j\in
    X_i}(v_{ij}-p_j)\geq 0$, and (ii) for any other subset of
  consecutive items $T$ with $|T|= d_i$,
  $u_i(\vec{X},\vec{p})=\sum\limits_{j\in X_i}(v_{ij}-p_j)\geq
  \sum\limits_{j\in T}(v_{ij}-p_{j})$;
\item[$\bullet$]if $X_i=\emptyset$ (i.e., $i$ wins nothing), then, for any
  subset of consecutive items $T$ with $|T|= d_i$, $\sum\limits_{j\in
    T}(v_{ij}-p_j)\leq 0$.
\end{itemize}
\end{definition}
In the literature, there have been two other types of envy-free concepts, namely,
sharp item envy-free \cite{CDGZ12} and bundle envy-free \cite{FFLS12}.
Sharp item envy-free requires that each buyer would not envy a
bundle of items with the number of her demand while bundle envy-free
illustrates that no one would envy the bundle bought by any other
buyer. From the definition of those three envy-free concepts, we have the
following inclusive relations: \\ $\text{sharp item
    envy-free}$ $\Rightarrow$ $\text{(sharp) bundle envy-free}$,
\\ $\text{consecutive envy-free}
\Rightarrow\text{(consecutive) bundle envy-free}$
\begin{example}[Three types of envy-freeness]\label{ex-three-envy-free}
  Suppose there are two buyers $i_1$ and $i_2$ with per-unit-quality
  $v_{i_1}=10$, $v_{i_2}=8$ and $d_{i_1}=1$, $d_{i_2}=2$. The item
  $j_1$, $j_2$, $j_3$ with quality as $q_{j_1}=q_{j_3}=1$ and
  $q_{j_2}=3$. The optimal solution of the three types of
  envy-freeness are as follows:
\begin{itemize}
\item[$\bullet$] The optimal consecutive envy-free solution, $X_{i_1}=\{j_3\}$, $X_{i_2}=\{j_1,j_2\}$ and $p_{j_1}=p_{j_3}=6$ and $p_{j_2}=26$ with total revenue $38$;
\item[$\bullet$] Optimal sharp item envy-free solution, $X_{i_1}=\{j_2\}$, $X_{i_2}=\{j_1,j_3\}$ and $p_{j_1}=p_{j_3}=8$ and $p_{j_2}=28$ with total revenue $44$;
\item[$\bullet$]Optimal (relaxed) bundle envy-free solution, $X_{i_1}=\{j_2\}$, $X_{i_2}=\{j_1,j_3\}$ and $p_{j_1}=p_{j_3}=8$ and $p_{j_2}=30$ with total revenue $46$;
\end{itemize}
\end{example}

\begin{definition}(Competitive Equilibrium)
  We say an outcome of the market $(\vec{X},\vec{p})$ is a {\em competitive
  equilibrium} if it satisfies two conditions.
  \begin{itemize}
  \item[$\bullet$]$(\vec{X},\vec{p})$ must be consecutive envy-free.
  \item[$\bullet$]The unsold items must be priced at zero.
  \end{itemize}
\end{definition}

We are interested in the revenue maximizing competitive equilibrium and envy-free solutions.

It is well known that a competitive equilibrium always exists for unit
demand buyers (even for general $v_{ij}$ valuations)~\cite{SS71}.
For our consecutive multi-unit demand model, however, a competitive equilibrium
may not always exist as the following example shows.

\begin{example}[Competitive equilibrium may not exist]\label{ex-max-CE-not-exist}
There are two buyers $i_1,i_2$ with values $v_{i_1}=10$ and $v_{i_2}=9$, respectively.
Let their  demands be $d_{i_1}=1$ and $d_{i_2}=2$, respectively.
Let the seller have
two items $j_1,j_2$, both with the unit quality $\qual_{j_1}=\qual_{j_2}=1$.
If $i_1$ wins an item,
without loss of generality, say $j_1$, then $j_2$ is unsold and
$p_{j_2}=0$; by envy-freeness of $i_1$, we have $p_{j_1} = 0$ as well.
Thus,
$i_2$ envies the bundle $\{j_1,j_2\}$.
On the other hand, if $i_2$ wins both items, then
$p_{j_1}+p_{j_2}\le v_{i_2j_1}+v_{i_2j_2}=18$, implying that $p_{j_1}\le
9$ or $p_{j_2}\le 9$.
Therefore, $i_1$ is not envy-free. Hence, there is no
competitive equilibrium in the given instance.
\end{example}

In the unit demand case, it is well-known that the set of equilibrium prices
forms a distributive lattice; hence, there exist extremes which correspond
to the maximum and the minimum equilibrium price vectors. In our consecutive
demand model, however, even if a competitive equilibrium
exists, maximum equilibrium prices may not exist.

\begin{example}[Maximum equilibrium need not exist]\label{example-no-max-eq}
There are two buyers $i_1,i_2$ with values $v_{i_1}=10,v_{i_2}=1$ and
demands $d_{i_1}=2,d_{i_2}=1$, and two items $j_1,j_2$ with unit quality
$\qual_{j_1}=\qual_{j_2}=1$. It can be seen that allocating the two items to
$i_1$ at prices $(19,1)$ or $(1,19)$ are both revenue maximizing
equilibria; but there is no equilibrium price vector which is at least
both $(19,1)$ and $(1,19)$.
\end{example}

Because of the consecutive multi-unit demand,
it is possible that some items are
`over-priced'; this is a significant difference between consecutive multi-unit and unit demand models.  Formally, in a solution
$(\vec{X},\vec{p})$, we say an item $j$ is {\em over-priced} if there
is a buyer $i$ such that $j\in X_i$ and $p_j> v_i\qual_j$. That is,
the price charged for item $j$ is larger than its contribution to the
utility of its winner.

\begin{example}[Over-priced items]\label{example-over-price}
There are two buyers $i_1,i_2$ with values $v_{i_1}=20,v_{i_2}=10$ and
demands $d_{i_1}=1$ and $d_{i_2}=2$, and three items $j_1,j_2,j_3$ with
qualities $\qual_{j_1}=3,\qual_{j_2}=2,\qual_{j_3}=1$. We can see that the
allocations $X_{i_1}=\{j_1\}, X_{i_2}=\{j_2,j_3\}$ and prices
$(45,25,5)$ constitute a revenue maximizing envy-free solution with
total revenue $75$, where item $j_2$ is over-priced. If no items are
over-priced, the maximum possible prices are $(40,20,10)$ with total
revenue $70$.
\end{example}

\section{Optimal Auction for the Single Peak Case}
\label{sec:single}
The goal of this section is to present our optimal auction for the single
peak case that serves as an elementary component in the general case later.
En route, several principal techniques are examined
exhaustively to the extent that they can be applied directly in the next section.
By employing these techniques, we show that the optimal
Bayesian Incentive Compatible auction can be represented by a simple
Incentive Compatible one. Furthermore, this optimal auction can be
implemented efficiently. Let
$T_i(v_i)=\E_{\vi}[t_i(\vec{v})]$, $P_i(v_i)=\E_{\vi}[p_i(\vec{v})]$ and $\phi_i(v_i)=v_i-\frac{1-F_i(v_i)}{f_i(v_i)}$. From Myerson' work \cite{My81}, we obtain the following three lemmas.
 \begin{lemma}[From \cite{My81}]
\label{lem:monotone}
  A mechanism $M=(x,p)$ is Bayesian Incentive Compatible if and only if:\\
  a) $T_i(x)$ is monotone non-decreasing for any agent $i$.\\
  b) $P_i(v_i)=v_iT_i(v_i)-\int_{\lv_i}^{v_i}T_i(z)dz$
\end{lemma}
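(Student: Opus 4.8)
The plan is to reduce the BIC constraint (\ref{eq:BIC}) to a single-parameter self-selection condition on the interim quantities $T_i$ and $P_i$, and then run the standard Myerson argument in both directions. First I would take the expectation over $\vi$ in (\ref{eq:BIC}); by independence of the values and linearity of expectation, $\E_{\vi}[v_it_i(v'_i;\vi)]=v_iT_i(v'_i)$ and $\E_{\vi}[p_i(v'_i;\vi)]=P_i(v'_i)$, so, writing $U_i(v_i)=v_iT_i(v_i)-P_i(v_i)$ for the interim utility, BIC is exactly
\[
U_i(v_i)\ \geq\ v_iT_i(v'_i)-P_i(v'_i)\qquad\text{for all }v_i,v'_i.
\]

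For necessity, I would apply this inequality twice, once with type $v_i$ contemplating the report $v'_i$ and once with type $v'_i$ contemplating $v_i$, and rearrange both using $v_iT_i(v'_i)-P_i(v'_i)=U_i(v'_i)+(v_i-v'_i)T_i(v'_i)$. This yields the sandwich
\[
T_i(v'_i)\,(v_i-v'_i)\ \leq\ U_i(v_i)-U_i(v'_i)\ \leq\ T_i(v_i)\,(v_i-v'_i).
\]
Taking $v_i>v'_i$ and dividing by $v_i-v'_i>0$ gives $T_i(v_i)\geq T_i(v'_i)$, which is (a). The same inequality exhibits $U_i$ as a pointwise supremum of affine functions of $v_i$, hence convex, hence absolutely continuous with $U_i'(v_i)=T_i(v_i)$ at every point of differentiability; integrating from $\lv_i$ gives $U_i(v_i)=U_i(\lv_i)+\int_{\lv_i}^{v_i}T_i(z)\,dz$. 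Under the normalization $U_i(\lv_i)=0$ (individual rationality binding at the lowest type, which is without loss for revenue maximization and is precisely what condition (b) encodes), substituting $U_i(v_i)=v_iT_i(v_i)-P_i(v_i)$ and solving for $P_i$ yields (b).

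For sufficiency, I would assume (a) and (b), so that $U_i(v_i)=\int_{\lv_i}^{v_i}T_i(z)\,dz$, and compute the gain from a type $v_i$ misreporting as $v'_i$. Substituting (b) for $P_i(v'_i)$ collapses the deviation gain to
\[
U_i(v_i)-\bigl(v_iT_i(v'_i)-P_i(v'_i)\bigr)=\int_{v'_i}^{v_i}\bigl(T_i(z)-T_i(v'_i)\bigr)\,dz .
\]
Monotonicity (a) makes the integrand non-negative when $z>v'_i$ and non-positive when $z<v'_i$, so the integral is non-negative in both the $v_i>v'_i$ and $v_i<v'_i$ cases; hence no deviation is profitable and $M$ is BIC.

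The main obstacle is the analytic bookkeeping in the necessity direction: justifying the integral representation of $U_i$ rigorously when $T_i$ is not assumed continuous. Rather than differentiating under the integral, I would lean on the convexity of $U_i$ obtained from the sandwich to deduce absolute continuity together with the almost-everywhere identity $U_i'=T_i$, i.e.\ an envelope-theorem argument. The second delicate point is the boundary constant: since BIC pins down $P_i$ only up to the additive term $U_i(\lv_i)$, I would state explicitly that (b) corresponds to the normalization $U_i(\lv_i)=0$, exactly as in \cite{My81}.
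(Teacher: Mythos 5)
Your proof is correct: the paper offers no proof of this lemma at all, citing it directly from Myerson \cite{My81}, and your argument is precisely the standard one from that source --- the reduction to interim quantities, the two-sided sandwich giving monotonicity of $T_i$ and convexity of $U_i$, and the integral computation $\int_{v'_i}^{v_i}\bigl(T_i(z)-T_i(v'_i)\bigr)\,dz\ge 0$ for sufficiency. You are also right to flag that condition (b) as stated silently incorporates the normalization $U_i(\lv_i)=0$, which BIC alone does not force (it pins down $P_i$ only up to a constant); that is exactly how the lemma must be read for the revenue-maximization use it serves in the paper.
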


\begin{lemma}[From \cite{My81}]
\label{lem:revenue}
  For any BIC mechanism $M=(x,p)$, the expected revenue
  $\Ev[\sum_iP_i(v_i)]$ is equal to the virtual surplus
  $\Ev[\sum_i\phi_i(v_i)t_i(\vec{v})]$.
\end{lemma}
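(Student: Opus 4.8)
The plan is to start from the payment characterization in Lemma~\ref{lem:monotone}(b) and integrate it against buyer $i$'s own value distribution. Since $P_i(v_i)=\E_{\vi}[p_i(\vec{v})]$, the quantity $\E_{v_i}[P_i(v_i)]$ equals $\Ev[p_i(\vec{v})]$, so the target identity will follow if I can show, for each fixed $i$, that
\begin{equation*}
\E_{v_i}[P_i(v_i)]=\E_{v_i}[\phi_i(v_i)T_i(v_i)],
\end{equation*}
and then sum over $i$. The final step after that is to use the independence of values across buyers to move the $\vi$-expectation back inside, converting $T_i(v_i)=\E_{\vi}[t_i(\vec{v})]$ into $\Ev[\phi_i(v_i)t_i(\vec{v})]$.

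First I would substitute the payment identity into $\E_{v_i}[P_i(v_i)]=\int_{\lv_i}^{\uv_i} P_i(v_i) f_i(v_i)\,dv_i$. This splits into a clean term $\int_{\lv_i}^{\uv_i} v_iT_i(v_i)f_i(v_i)\,dv_i=\E_{v_i}[v_iT_i(v_i)]$ and the double-integral term $\int_{\lv_i}^{\uv_i} f_i(v_i)\!\left(\int_{\lv_i}^{v_i} T_i(z)\,dz\right) dv_i$. The crux of the argument is to simplify this second term by exchanging the order of integration (Fubini) over the triangular region $\{(z,v_i):\lv_i\le z\le v_i\le \uv_i\}$. Integrating $v_i$ first yields $\int_{\lv_i}^{\uv_i} T_i(z)\!\left(\int_z^{\uv_i} f_i(v_i)\,dv_i\right) dz=\int_{\lv_i}^{\uv_i} T_i(z)\,(1-F_i(z))\,dz$, using $F_i(\uv_i)=1$.

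I would then rewrite this as an expectation by multiplying and dividing by $f_i$: $\int_{\lv_i}^{\uv_i} T_i(z)\frac{1-F_i(z)}{f_i(z)}f_i(z)\,dz=\E_{v_i}\!\left[T_i(v_i)\frac{1-F_i(v_i)}{f_i(v_i)}\right]$. Combining the two pieces gives $\E_{v_i}[P_i(v_i)]=\E_{v_i}\!\left[T_i(v_i)\left(v_i-\frac{1-F_i(v_i)}{f_i(v_i)}\right)\right]=\E_{v_i}[\phi_i(v_i)T_i(v_i)]$ directly from the definition of the virtual value $\phi_i$. Summing over $i$ and applying independence to get $\E_{v_i}[\phi_i(v_i)T_i(v_i)]=\Ev[\phi_i(v_i)t_i(\vec{v})]$ then establishes $\Ev[\sum_iP_i(v_i)]=\Ev[\sum_i\phi_i(v_i)t_i(\vec{v})]$.

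The main obstacle is the Fubini step: one must justify the swap and verify that the inner integral $\int_z^{\uv_i} f_i$ correctly evaluates to $1-F_i(z)$; everything after that is routine bookkeeping with the definition of $\phi_i$. A minor point worth flagging is that the whole derivation rests on Lemma~\ref{lem:monotone}(b), so the BIC hypothesis is exactly what licenses the payment rule we start from, and no separate appeal to incentive compatibility is needed beyond that.
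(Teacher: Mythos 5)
Your derivation is correct and complete: starting from the payment identity of Lemma~\ref{lem:monotone}(b), swapping the order of integration to turn $\int f_i(v_i)\int_{\lv_i}^{v_i}T_i(z)\,dz\,dv_i$ into $\int T_i(z)(1-F_i(z))\,dz$, and then folding in the definition of $\phi_i$ and independence is exactly the standard Myerson argument. The paper itself offers no proof of this lemma --- it is stated as imported from \cite{My81} --- and your write-up reproduces that source's proof, so there is nothing to flag beyond the routine caveats you already note (boundedness of the distributions justifies Fubini, and $f_i>0$ on the support licenses the division by $f_i$).
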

The following lemma is the direct result of Lemma \ref{lem:monotone} and \ref{lem:revenue}.
\begin{lemma}
  \label{lem:single}
  Suppose that $x$ is the allocation function that maximizes
  $\Ev[\phi_i(v_i)t_i(\vec{v})]$ subject to the constraints that
  $T_i(v_i)$ is monotone non-decreasing for any bidders' profile
  $\vec{v}$, any agent $i$ is assigned either $d_i$ consecutive slots
  or nothing. Suppose also that
  \begin{equation}
    \label{eq:single-price}
    p_i(\vec{v})=v_it_i(\vec{v})-\int_{\lv_i}^{v_i}t_i(v_{-i},s_i)ds_i
  \end{equation}
  Then $(x,p)$ represents an optimal mechanism for the rich media
  advertisement problem in single-peak case.
\end{lemma}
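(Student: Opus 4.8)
The plan is to recognize this as the standard Myerson virtual-surplus argument, specialized to the consecutive-demand feasibility constraints, and to carry it out in two movements: first verifying that the proposed $(x,p)$ is itself BIC, then using the revenue-equals-virtual-surplus identity to certify that nothing BIC can do better.

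First I would verify that $(x,p)$ is Bayesian Incentive Compatible by checking the two conditions of Lemma~\ref{lem:monotone}. Condition (a), monotonicity of $T_i$, holds by construction, since it is imposed explicitly as a constraint in the optimization defining $x$. For condition (b), I would start from the pointwise payment rule~\eqref{eq:single-price} and take the expectation over $\vi$:
\[
P_i(v_i)=\E_{\vi}[p_i(\vec{v})]=v_i\,\E_{\vi}[t_i(\vec{v})]-\E_{\vi}\!\left[\int_{\lv_i}^{v_i}t_i(v_{-i},s_i)\,ds_i\right].
\]
The only genuine step is to exchange the expectation over $\vi$ with the integral over $s_i$ (Fubini, justified by boundedness of $t_i$), which turns the second term into $\int_{\lv_i}^{v_i}\E_{\vi}[t_i(v_{-i},s_i)]\,ds_i=\int_{\lv_i}^{v_i}T_i(z)\,dz$. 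This reproduces exactly $P_i(v_i)=v_iT_i(v_i)-\int_{\lv_i}^{v_i}T_i(z)\,dz$, so both conditions of Lemma~\ref{lem:monotone} hold and $(x,p)$ is BIC.

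Next I would establish optimality through Lemma~\ref{lem:revenue}. The key observation is that the feasible region of the program defining $x$ contains the allocation of every competing BIC mechanism: any BIC mechanism must assign each winning buyer $d_i$ consecutive slots (a feasibility requirement of the problem) and, by Lemma~\ref{lem:monotone}(a), must induce a monotone $T_i$. By Lemma~\ref{lem:revenue}, the expected revenue of any such mechanism equals its virtual surplus $\Ev[\sum_i\phi_i(v_i)t_i(\vec{v})]$, which depends on the mechanism only through its allocation. Hence the largest revenue attainable by any BIC mechanism is precisely the optimal value of the program that $x$ solves; since $(x,p)$ is BIC and attains that value, it is revenue-optimal.

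The main conceptual obstacle is that the monotonicity constraint couples the allocation across different realizations of $v_i$ (because $T_i$ is an expectation over $\vi$), so one cannot simply maximize the integrand $\sum_i\phi_i(v_i)t_i(\vec{v})$ pointwise whenever a virtual value $\phi_i$ fails to be monotone; this is where Myerson-style ironing would enter. For the \emph{correctness} of the lemma this coupling is harmless, since the argument above treats the constrained maximization as a black box, but it is exactly the place where the single-peak structure will later be exploited to compute the maximizer $x$ efficiently. I would flag explicitly that single-peakness plays no role in the optimality proof itself and is needed only for the subsequent algorithmic realization of $x$.
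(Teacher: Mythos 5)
Your proof is correct and follows exactly the route the paper intends: the paper states Lemma~\ref{lem:single} as a ``direct result'' of Lemmas~\ref{lem:monotone} and~\ref{lem:revenue} without spelling out the argument, and your two movements (checking both conditions of Lemma~\ref{lem:monotone} to certify that $(x,p)$ is BIC, then using Lemma~\ref{lem:revenue} to bound every BIC competitor's revenue by the constrained virtual-surplus maximum that $x$ attains) are precisely the omitted details. Your closing observation, that single-peakedness plays no role in the optimality argument and is needed only for the later algorithmic computation of $x$ via the dynamic program, is likewise consistent with how the paper deploys this lemma.
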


We will use dynamic programming to maximize the virtual surplus in Lemma \ref{lem:revenue}.
 Suppose all the buyers are sorted in a no-increasing order according to their virtual values. We will need the following two useful lemmas.
 Lemma~\ref{lem:con} states that all the allocated slots are consecutive.
\begin{lemma}
\label{lem:con}
  There exists an optimal allocation $x$ that maximizes
  $\sum_i\phi_i(v_i)t_i(\vec{v})$ in the single peak case, and satisfies the
  following condition. For any unassigned slot $j$, it must be that either
  $\forall j'>j$, slot $j'$ is unassigned or $\forall j'<j$, slot $j'$ is unassigned.
\end{lemma}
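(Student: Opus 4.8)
The plan is to fix any optimal allocation and \emph{compact} it toward the peak by a local sliding argument that exploits the single-peak (unimodal) structure of the qualities. I would phrase this through a potential function together with an extremal choice of optimal allocation, which lets a single improving move produce a contradiction and so avoids any explicit termination bookkeeping.

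First I would make two reductions. Writing the objective as $\sum_i \phi_i(v_i) t_i(\vec{v}) = \sum_{j \text{ assigned}} q_j\,\phi_{i(j)}(v_{i(j)})$, where $i(j)$ denotes the winner occupying slot $j$, I observe that no buyer with $\phi_i(v_i)<0$ is ever allocated in an optimal solution: de-allocating such a buyer deletes a strictly negative term and cannot decrease the objective, so I may assume every winner has $\phi_i(v_i)\ge 0$. Second, among all optimal allocations I pick one, call it $x^*$, that minimizes the potential $\Phi=\sum_{j \text{ assigned}} |j-k|$, the total distance of assigned slots from the peak $k$; the minimum is attained because there are finitely many allocations.

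Then I would show $x^*$ has no interior gap. Suppose not: some unassigned slot has assigned slots on both sides, so I may take a maximal run of unassigned slots $[a,b]$ with slot $a-1$ occupied by a block $B_L$ ending at $a-1$ and slot $b+1$ occupied by a block $B_R$ starting at $b+1$. The argument splits on the location of the peak. If $k\ge b+1$, then $B_L$ lies entirely in the non-decreasing branch $[1,k]$, so sliding $B_L$ one slot to the right into the gap keeps it inside $[1,k]$, weakly raises each of its slot qualities, hence weakly raises $t_i$ for its owner; as that owner has $\phi_i(v_i)\ge 0$, the objective does not drop, while $\Phi$ strictly decreases by $d_i$. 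Symmetrically, if $k\le a-1$, I slide $B_R$ one slot left along the non-increasing branch. If $a\le k\le b$, the peak is inside the gap and I slide $B_L$ right by one slot (it stays $\le k$ since $a\le k$), again weakly raising its quality and strictly lowering $\Phi$. In every case the new allocation is feasible (each buyer keeps $d_i$ consecutive slots, and disjointness holds because I move only into the vacated gap), remains optimal (objective non-decreasing, hence unchanged), yet has strictly smaller $\Phi$, contradicting the choice of $x^*$. Thus $x^*$ has no interior gap, which is exactly the stated condition.

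The main obstacle is verifying that a quality-non-decreasing, potential-decreasing slide is always available, and this is precisely where unimodality is essential: the single-peak property guarantees that whichever side of the gap I slide from, the moved block stays on one monotone branch and advances toward the peak, so its total assigned quality cannot fall. A secondary subtlety is that filling one gap may open a new gap on the vacated side; I sidestep any iteration by working with the extremal minimum-$\Phi$ optimal allocation, so one improving slide already gives the contradiction. I would also note that the non-increasing sorting of buyers by virtual value plays no role here—it matters only for the subsequent dynamic program—so the proof rests solely on $\phi_i(v_i)\ge 0$ for winners and the unimodality of $(q_j)$.
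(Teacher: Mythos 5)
Your proof is correct and follows essentially the same route as the paper's: both repair an interior gap by sliding an adjacent block toward the peak, using unimodality to argue that the slid block's per-slot qualities (hence the objective, given nonnegative virtual values of winners) cannot drop, so the modified allocation stays optimal. Your extremal choice of a minimum-potential optimal allocation and your explicit dismissal of winners with $\phi_i(v_i)<0$ are tidier bookkeeping than the paper's iterative ``keep doing this'' argument, which leaves both termination and the sign of $\phi_i$ implicit, but the key exchange step is identical.
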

\begin{proof}
  We pick an arbitrary optimal allocation $x$ that maximizes the
  summation of virtual values. If $x$ satisfies the property, it is
  the desired allocation and we are done. Otherwise, we do the
  following modification on $x$. Let slot $j$ $(1<j<m)$ be the
  unassigned slot between buyers' allocated slots. Since the quality
  function are single peaked, we have $q_j\ge q_{j+1}$ or
  $q_j\ge q_{j-1}$. We only prove the lemma for the case $q_j\ge q_{j+1}$
  and the proof for the other case is symmetric. Let slot $j'>j$ be
  the leftmost assigned slot on the right side of $j$. We modify $x$
  by assigning the buyer $i$ who got the slot $j'$ the $d_i$
  consecutive slots from $j$. It is easy to check the resulting
  allocation is still feasible and optimal. Moreover, the slot $j$
  becomes assigned now. By keep doing this, we can eliminate all
  unassigned slots between buyers' allocations. Thus, the resulting
  allocation must be consecutive.
\end{proof}
Next, we prove that this consecutiveness even holds for all set $[s]
\subseteq [n]$. That is, there exists an optimal allocation that
always assigns the first $s$ buyers consecutively for all $s\in [n]$.
For convenience, we say that a slot is ``out of'' a set of buyers if the slot
is not assigned to any buyers in that set. Then the consecutiveness
can be formalized in the following lemma.

\begin{lemma}
\label{lem:subcon}
  There exists an optimal allocation $x$ in the single peak case, that
  satisfies the following condition. For any slot $j$ out of $[s]$, it
  must be either $\forall j'>j$, slot $j'$ is out of $[s]$ or $\forall
  j'<j$, slot $j'$ is out of $[s]$.
\end{lemma}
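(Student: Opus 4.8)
The plan is to recast the statement in terms of the spatial order of the winners and then prove it by a greedy ``peeling'' exchange argument built on top of Lemma~\ref{lem:con}. First I would record a reformulation. By Lemma~\ref{lem:con} we may assume the set of all assigned slots is one contiguous block, occupied from left to right by the winning buyers $b_1,b_2,\ldots,b_r$ (losers do not affect which slots land in $[s]$). Since the $b_t$ are distinct buyer indices, a direct check shows that the set of slots assigned to $[s]$ is contiguous for \emph{every} $s$ if and only if the sequence $b_1,\ldots,b_r$ has no interior local maximum, i.e. is \emph{valley-shaped}: there is a position $t^\ast$ with $b_1\ge\cdots\ge b_{t^\ast}\le\cdots\le b_r$. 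Indeed, an interior local maximum at $p$ produces, with $s=b_p-1$, indices $\le s$ on both sides of $p$ and $>s$ at $p$, so $\{t:b_t\le s\}$ is not an interval; the converse is equally routine. Hence it suffices to exhibit an optimal allocation whose winners, read left to right, have indices first non-increasing and then non-decreasing, so that the lowest virtual-value winners sit at the two ends and the highest sits in the middle.

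For the key step, among the winners let $w$ be the one of largest index, hence of smallest virtual value $\phi_w$ (the buyers are sorted by non-increasing $\phi$). I would show that $w$'s block can be moved to one end of the contiguous block without decreasing $\sum_i\phi_i(v_i)t_i(\vec{v})$, by bubbling it outward one buyer at a time. Because the qualities are single peaked, the restriction of the quality sequence to the current block is again single peaked, so its minimum is attained at a block endpoint, and there is a ``downhill'' direction from $w$ toward an endpoint along which the quality of $w$'s block is non-increasing (move left if $w$ lies on the increasing side of the peak, right if on the decreasing side, and either way if it straddles the peak). Each elementary swap exchanges $w$ with an adjacent winner of higher virtual value, handing the higher-quality slots to that neighbour and pushing $w$ onto the lower-quality ones; writing the change in objective as a difference of virtual values times the quality ceded, both factors have the correct sign, so the swap does not decrease the objective. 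Once $w$ reaches an endpoint I would peel off its block and recurse on the complementary sub-block, which is again contiguous and single peaked, with the remaining winners. Processing buyers in order of decreasing index places each successive (higher virtual-value) buyer strictly inside the ones already placed, so the final arrangement is valley-shaped, as required.

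Since we begin from an optimal allocation, every bubbling swap, being both non-decreasing by the argument above and unable to exceed the optimum, in fact has gain exactly zero; optimality is therefore preserved throughout, the procedure terminates with an optimal allocation of the desired form, and the reformulation then yields prefix-contiguity for all $s\in[n]$.

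The main obstacle is the single peak itself: unlike a monotone quality profile, the cheapest endpoint and the downhill direction depend on where $w$ sits relative to the peak, which forces the case split (left of, right of, or straddling the peak) and, more delicately, means that for buyers with unequal demands each swap compares sums of qualities over sliding windows rather than individual slots. The signs of these window comparisons must be read off from the monotonicity of partial sums on the relevant side of the peak, and the straddling case, where $w$'s block contains the peak, has to be treated separately to guarantee that at least one outward direction is genuinely non-increasing in $w$'s quality.
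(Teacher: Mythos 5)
Your reduction to a valley-shaped winner sequence and the peel-and-bubble scheme is a reasonable repackaging of the paper's exchange argument, and your two one-sided cases are fine: when $w$'s block lies entirely on the increasing (resp.\ decreasing) side of the peak, the shifted window can be compared slot-by-slot with the old one, even with unequal demands. The gap is the straddling case, which is exactly the crux of the lemma. Your parenthetical claim that when $w$'s block contains the peak you may move it \emph{either} way is false. Counterexample: qualities $(q_1,\ldots,q_5)=(5,\,9,\,10,\,1,\,0.5)$ (single peak at slot $3$), left neighbour $a$ on slot $1$, $w$ with $d_w=3$ on slots $2$--$4$, right neighbour $b$ on slot $5$, and $\phi_a=\phi_b=2>\phi_w=1$. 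Swapping $w$ with $a$ moves $w$ onto slots $1$--$3$, raising its quality from $20$ to $24$, so the objective changes by $(\phi_w-\phi_a)\cdot 4=-4<0$; only the swap with $b$ is non-decreasing. So when the block contains the peak, only one outward direction need be downhill, and which one it is cannot be read off from the position of the peak alone---it depends on comparing quality \emph{sums} of sliding windows against the neighbours' blocks. Note also that this is not a one-off difficulty: after a swap the block may still contain the peak, so the claim is needed at every bubbling step.

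What you actually need---and what your final paragraph concedes but never proves---is that at least one of the two swaps is non-decreasing. That assertion is precisely the content of Case 2 of the paper's proof: writing $W_w=\{j_1<\cdots<j_{d_w}\}$, the paper shows that either the first $d_b$ slots of $W_w$ have total quality at least that of $b$'s block, or the last $d_a$ slots of $W_w$ have total quality at least that of $a$'s block. Its argument assumes the first alternative fails, uses single-peakedness together with the adjacency inequalities (the first slot of $W_w$ has quality at least that of $a$'s best slot, the last slot of $W_w$ at least that of $b$'s best slot) to deduce $q_{j_1}<q_{j_{d_w}}$, and from this derives the second alternative, with a further case analysis over the relative sizes of $d_w$, $d_a$, $d_b$. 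Without this step your induction cannot proceed whenever the lowest-virtual-value winner holds the peak, so the proposal as written does not prove the lemma; the remainder (the valley-shaped reformulation, the one-sided swaps, and the observation that non-decreasing swaps starting from an optimum preserve optimality) is correct and is essentially the same route the paper takes.
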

\begin{proof}
    The idea is to pick an arbitrary optimal allocation $x$ and modify
  it to the desired one. Suppose $x$ does not satisfy the property on
  a subset $[s]$. By Lemma \ref{lem:con}, there is no unassigned slots
  in the middle of allocations to set $[s]$. Then there must be a slot
  assigned to a buyer $i$ out of the set $[s]$ that separates the
  allocations to $[s]$ We use $W_i$ to denote the allocated slots of
  buyer $i$. Suppose slot $k$ is the peak. There are two cases to be considered:
  \begin{itemize}
  \item[Case 1.]{\bf $k\notin W_i$}.\\
  Let $j$ and $j'$ be the leftmost and rightmost slot in
  $W_i$ respectively. We consider two cases $q_j\ge q_{j'}$ and
  $q_j<q_{j'}$.  We only prove for the first case and the proof for
  the other case is symmetric. If $q_j\ge q_{j'}$, we find the
  leftmost slot $j_1>j'$ assigned to $[s]$ and the rightmost slot
  $j_2<j_1$ not assigned to $[s]$. In addition, let $i_1 \in [s]$ be
  the buyer that $j_1$ is assigned to and $i_2>s$ be the buyer that
  $j_2$ is assigned to. In single peak case, it is easy to check
  $q_j\ge q_{j'}$ implies that all the slots assigned to $i_2$ have
  higher quality than $i_1$'s. Thus swapping the positions of $i_1$
  and $i_2$  will always
  increase the virtual surplus, $\sum_i\phi_i(v_i)t_i(\vec{v})$. By
  keep doing this, we can eliminate all slots out of $[s]$ in the
  middle of allocation to $[s]$ and attain the desired optimal
  solution.
  \item[Case 2.]{\bf $k \in W_i$}\\
   Suppose $W_i=\{j^i_{1},j^i_{2},\cdots,j^i_{u_i}\}$ with $j^i_{1}<j^i_{2}<\cdots<j^i_{u_i}$ and there exists $1\leq e\leq u_i$ such that $k=j^i_e$. Let $a$ and $b$ be the left and right neighbour buyers of $i$ winning slots next to $W_i$. As we know $a,b \in [s]$, hence, $v_a\ge v_i$ and $v_b\ge v_i$. Let $W_a=\{j^a_{1},j^a_{2},\cdots,j^a_{u_a}\}$ and $W_b=\{j^b_{1},j^b_{2},\cdots,j^b_{u_i}\}$ denote  the allocated slots of buyer $a$ and $b$ respectively, where $j^a_{1}<j^a_{2}<\cdots<j^a_{u_a}$ and $j^b_{1}<j^b_{2}<\cdots<j^b_{u_b}$. As $k\in W_i$, then $q_{j^i_1}\ge q_{j^a_{u_a}}$ and $q_{j^i_{u_i}}\ge q_{j^b_1}$ (noting that $j^a_{u_a}$ and $j^b_{1}$ are the indices of slots with the largest qualities in $W_a$ and $W_b$ respectively). We will show that either swapping winning slots of $i$ with $a$ or with $b$ will increase the virtual surplus. To prove this, there four cases needed to be considered: (1). $u_i\geq u_a$ and $u_i\geq u_b$; (2). $u_i\geq u_a$ and $u_i< u_b$; (3). $u_i<u_a$ and $u_i\geq u_b$; (4). $u_i<u_a$ and $u_i< u_b$. We only prove the case (1) since the other cases can be proved similarly. Now, suppose $u_i\geq u_a$ and $u_i\geq u_b$, then we must have
   either (i). $\sum_{k=1}^{u_b}q_{j^i_{k}}\geq \sum_{k=1}^{u_b}q_{j^b_k}$ or (ii). $\sum_{k=1}^{u_a}q_{j^i_{u_i-k+1}}\geq \sum_{k=1}^{u_a}q_{j^a_k}$.
   Suppose (i) is not true, that is  $\sum_{k=1}^{u_b}q_{j^i_{k}}< \sum_{k=1}^{u_b}q_{j^b_k}$, if $u_b\le e$, then we have $q_{j^i_1}\le q_{j^i_{u_b}}$, as a result, $$u_b q_{j^i_1}\le \sum_{k=1}^{u_b}q_{j^i_{k}}< \sum_{k=1}^{u_b}q_{j^b_k}\le u_b q_{j^b_1}\le u_b q_{j^i_{u_i}},$$
   thus, $q_{j^i_1}<q_{j^i_{u_i}}$; otherwise $u_b>e$, then it must also hold that $q_{j^i_1}\le q_{j^i_{u_b}}$ (otherwise, for any $1\le \ell\le u_b$,
   $q_{j^i_{\ell}}\ge q_{j^i_{u_b}}\ge q_{j^b_1}$ implying that $\sum_{k=1}^{u_b}q_{j^i_{k}}\geq u_bq_{j^b_1}\ge \sum_{k=1}^{u_b}q_{j^b_k}$, contradiction), hence, for any $1\le \ell\le u_b$, $q_{j^i_{\ell}}\ge q_{j^i_{1}}$, it follows,
   $$u_b q_{j^i_1}\le \sum_{k=1}^{u_b}q_{j^i_{k}}< \sum_{k=1}^{u_b}q_{j^b_k}\le u_b q_{j^b_1}\le u_b q_{j^i_{u_i}},$$
   in both cases, it is obtained that $q_{j^i_1}<q_{j^i_{u_i}}$, therefore,
    $$\sum_{k=1}^{u_a}q_{j^i_{u_i-k+1}}> u_aq_{j^i_1}\ge \sum_{k=1}^{u_a}q_{j^a_k}$$ implying (ii) is true.
Thus, if (i) is true, by simple calculations, swapping winning slots of $i$ with $b$ will increase the virtual value (since $v_b\ge v_i$), otherwise swapping winning slots of $i$ with $a$ will increase the virtual surplus (since $v_a\ge v_i$). Then keep doing it by the method of Case 1 until eliminating all slots out of $[s]$ in the
  middle of allocation to $[s]$ and attaining the desired optimal solution.
   \end{itemize}
\end{proof}
Since the optimal solution always assigns to $[s]$ consecutively (Lemma \ref{lem:subcon}), we can boil the
allocations to $[s]$ down to an interval denoted by $[l,r]$.  Let
$g[s,l,r]$ denote the maximized value of our objective function
$\sum_i\phi_i(v_i)t_i(\vec{v})$ when we only consider first $s$ buyers
and the allocation of $s$ is exactly the interval $[l,r]$. Then we
have the following transition function.
\begin{equation}
\label{equation:dp}
g[s,l,r]=\max\left\{
\begin{array}{l}
g[s-1,l,r]\\\\
g[s-1,l,r-d_s]+\phi_s(v_s)\sum_{j=r-d_s+1}^{r}q_j\\\\
g[s-1,l+d_s,r]+\phi_s(v_s)\sum_{j=l}^{l+d_s-1}q_j
\end{array}
\right.
\end{equation}

Our summary statement is as follows.

\begin{theorem}
\label{thm:monotone}
  The mechanism that applies the allocation rule according to
  Dynamic Programming~(\ref{equation:dp}) and payment rule according to
  Equation~(\ref{eq:single-price}) is an optimal mechanism for the banner
  advertisement problem with single peak qualities.
\end{theorem}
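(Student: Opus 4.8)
The plan is to verify the hypotheses of Lemma~\ref{lem:single}. That lemma reduces optimality to producing an allocation rule $x$ that maximizes the virtual surplus $\Ev[\sum_i\phi_i(v_i)t_i(\vec{v})]$ subject both to the sharp/consecutive combinatorial constraints and to the requirement that each $T_i(v_i)$ be monotone non-decreasing; the payment rule~(\ref{eq:single-price}) then yields an optimal Bayesian incentive compatible mechanism. The Dynamic Programming~(\ref{equation:dp}) is built to maximize the virtual surplus pointwise in $\vec{v}$ under the combinatorial constraints only, so the two things I must establish are: (a) that the recurrence indeed computes this combinatorial optimum, and (b) that the allocation it returns is automatically monotone, so that it is in fact a maximizer subject to the monotonicity constraint as well.

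For (a), fix $\vec{v}$ and assume the buyers are indexed in non-increasing order of $\phi_i(v_i)$, so buyer $s$ has the smallest virtual value among $[s]$. Lemma~\ref{lem:subcon} guarantees an optimal allocation in which, for every prefix $[s]$, the slots assigned to $[s]$ form one consecutive interval; this is exactly what makes the state $g[s,l,r]$ meaningful. Starting from such an optimum and looking at $[s]$ occupying $[l,r]$, Lemma~\ref{lem:subcon} applied to $[s-1]$ forces the set $[s-1]$ to occupy a subinterval of $[l,r]$, so buyer $s$'s $d_s$ slots, being the complement, must be the left block $[l,l+d_s-1]$ or the right block $[r-d_s+1,r]$ (or $s$ wins nothing). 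These are precisely the three branches of~(\ref{equation:dp}), with the contribution of $s$ equal to $\phi_s(v_s)$ times the total quality it takes; hence the recurrence is exhaustive and correct, and the overall optimum is $\max_{l,r}g[n,l,r]$.

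For (b), which I regard as the crux, fix $v_{-i}$ and vary $v_i$. The set of feasible allocations depends only on the demands and qualities, not on $v_i$, and $v_i$ enters the objective only through the single term $\phi_i(v_i)\,t_i$. Comparing the DP optima at two values $v_i<v_i'$ and adding their two optimality inequalities, all terms for $k\neq i$ cancel and one obtains $(\phi_i(v_i')-\phi_i(v_i))(t_i'-t_i)\ge 0$, where $t_i,t_i'$ are the total qualities allocated to $i$ in the two solutions. Under the standard regularity assumption that $\phi_i$ is non-decreasing (and otherwise after replacing $\phi_i$ by its ironed version, computable through the convex-closure operation assumed available), this yields $t_i'\ge t_i$, i.e. $t_i(v_i;v_{-i})$ is non-decreasing in $v_i$; taking expectation over $v_{-i}$ gives that $T_i(v_i)=\E_{\vi}[t_i]$ is monotone non-decreasing. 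The one point requiring care is tie-breaking: when the virtual-surplus maximizer is not unique the DP must resolve ties consistently (e.g. lexicographically) so that the selected $t_i$ is genuinely monotone rather than merely the optimal value.

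With (a) and (b) in hand, the DP allocation maximizes the virtual surplus under the combinatorial constraints and is itself monotone, hence it also maximizes the surplus among monotone feasible allocations. It therefore satisfies both hypotheses of Lemma~\ref{lem:single}, and paired with the payment rule~(\ref{eq:single-price}) gives a Bayesian incentive compatible, revenue-optimal mechanism for the single-peak case. I expect the genuine difficulty to lie in part (b): the DP is purely a combinatorial virtual-surplus maximizer with no built-in monotonicity, and because demands are sharp and slots must be consecutive the elementary unit-demand monotonicity reasoning does not apply directly; the maximizer-monotonicity inequality together with regularity/ironing and consistent tie-breaking is what bridges this gap.
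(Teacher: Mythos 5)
Your proposal is correct and follows essentially the same route as the paper: it reduces the theorem to the hypotheses of Lemma~\ref{lem:single}, takes the DP's correctness from Lemma~\ref{lem:subcon}, and proves monotonicity of $t_i(v_i,\vi)$ by the same exchange argument---writing the optimality inequality for the allocations at $v_i$ and $v_i'$ in both directions, cancelling the other buyers' contributions, and concluding $(\phi_i(v_i')-\phi_i(v_i))(t_i'-t_i)\ge 0$, with equal virtual values handled by the fact that the DP instance is then identical. Your explicit remarks on consistent tie-breaking and on ironing when regularity fails are refinements the paper leaves implicit (it simply invokes ``the regularity of $\phi_i$''), not a different argument.
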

\begin{proof}
To complete the proof, it suffices to prove that $T_i(v_i)$ is monotone
  non-decreasing. More specifically, we prove a stronger fact, that
  $t_i(v_i,\vi)$ is non-decreasing as $v_i$ increases. Given other
  buyers' bids $v_{-i}$, the monotonicity of $t_i$ is equivalent to
  $t_i(v_i,v_{-i})\le t_i(v'_i,v_{-i})$ if $v'_i>v_i$. Assuming that
  $v'_i>v_i$, the regularity of $\phi_i$ implies that $\phi_i(v_i)\le
  \phi_i(v'_i)$. If $\phi_i(v_i)=\phi_i(v'_i)$, then $t_i(v_i,v_{-i})=
  t_i(v'_i,v_{-i})$ and we are done.

  Consider the case that $\phi_i(v_i)<\phi_i(v'_i)$. Let $Q$ and $Q'$
  denote the total quantities obtained by all the other buyers except
  buyer $i$ in the mechanism when buyer $i$ bids $v_i$ and $v'_i$
  respectively.
  \begin{align*}
    \phi_i(v'_i)t_i(v'_i,v_{-i})+Q'&\ge \phi_i(v'_i)t_i(v_i,v_{-i})+Q\\
     \phi_i(v_i)t_i(v_i,v_{-i})+Q&\ge \phi_i(v_i)t_i(v'_i,v_{-i})+Q'.
  \end{align*}
  Above inequalities are due to the optimality of allocations when $i$
  bids $v_i$ and $v'_i$ respectively. It follows that
  \begin{align*}
    \phi_i(v'_i)(t_i(v_i,v_{-i})-t_i(v'_i,v_{-i}))\le Q'-Q\\
    \phi_i(v_i)(t_i(v_i,v_{-i})-t_i(v'_i,v_{-i})) \ge Q'-Q
  \end{align*}
  By the fact that $\phi_i(v_i)<\phi_i(v'_i)$, it must be
  $t_i(v_i,v_{-i})\le t_i(v'_i,v_{-i})$.
\end{proof}

\section{Multiple Peaks  Case}
\label{sec:general}

Suppose now that there are only $h$ peaks (local maxima) in
the qualities. Thus, there are at most $h-1$ valleys (local
minima). Since $h$ is a constant, we can enumerate all the buyers
occupying the valleys. After this enumeration, we can divide the
qualities into at most $h$ consecutive pieces and each of them forms a
single-peak. Then using similar properties as those in Lemma \ref{lem:con} and \ref{lem:subcon},
 we can obtain a larger size dynamic programming (still runs in polynomial time) similar
 to dynamic programming (\ref{equation:dp}) to solve the problem.

\begin{theorem}
\label{thm:multipeak}
  There is a polynomial algorithm to compute revenue maximization
  problem in Bayesian settings where the qualities of slots have a
  constant number of peaks.
\end{theorem}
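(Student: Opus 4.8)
The plan is to follow exactly the three-layer structure already used for the single-peak case, changing only the combinatorial core. By Lemmas~\ref{lem:monotone}, \ref{lem:revenue} and~\ref{lem:single}, maximizing expected revenue over all BIC mechanisms is equivalent to (a) computing an allocation $x$ that maximizes the virtual surplus $\sum_i\phi_i(v_i)t_i(\vec{v})$ subject to the feasibility constraint that each buyer receives either $d_i$ consecutive slots or nothing, and then (b) charging the payment~(\ref{eq:single-price}). Crucially, the monotonicity argument in the proof of Theorem~\ref{thm:monotone} never uses the single-peak assumption: it derives $t_i(v_i,v_{-i})\le t_i(v_i',v_{-i})$ for $v_i'>v_i$ purely from regularity of $\phi_i$ and the fact that the allocation is virtual-surplus-maximal for every reported value. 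Hence, once we exhibit a polynomial-time algorithm that maximizes the virtual surplus for multi-peak qualities (with a fixed, consistent tie-breaking rule so that equal virtual values yield equal $t_i$), monotonicity of $T_i$, the BIC property, and revenue-optimality all follow immediately, and the whole theorem reduces to this single algorithmic task.

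To maximize the virtual surplus I would first sort the buyers in non-increasing order of $\phi_i(v_i)$, as in the single-peak case. With $h$ peaks there are at most $h-1$ valleys (local minima), and these valleys cut the quality sequence into at most $h$ contiguous single-peak segments. The only way the single-peak dynamic program~(\ref{equation:dp}) can fail to apply directly is that a winner's block of $d_i$ consecutive slots may straddle a valley, spanning two adjacent segments. I would handle these straddling winners by exhaustive enumeration: for each valley, guess which buyer (if any) owns a block covering that valley slot and guess that block's exact position. Since a block covering a fixed valley slot has at most $m$ positions and there are $n$ candidate buyers, each valley contributes $O(nm)$ choices, for a total of $O((nm)^{h-1})$ guesses, which is polynomial because $h$ is constant. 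After fixing the straddling winners, the slots they occupy are removed and every remaining block lies entirely inside one single-peak segment; each truncated segment, being a contiguous sub-interval of a single-peak sequence, is still single-peak. I would then run an enlarged dynamic program over the remaining buyers (processed in sorted order) whose state records the occupied interval $[l_t,r_t]$ inside each of the $h$ segments, with transitions that either skip buyer $s$ or append its $d_s$ slots at the left or right end of one segment's current interval, a direct generalization of~(\ref{equation:dp}) with state space $O(n\cdot m^{2h})$ and thus polynomial time. The answer is the best objective over all enumerations.

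I expect the main obstacle to be the correctness of this decomposition, namely re-establishing the analogues of Lemmas~\ref{lem:con} and~\ref{lem:subcon} segment by segment. Concretely, one must show that, after the straddling winners are fixed, there is an optimal allocation in which, within each single-peak segment, the set of slots assigned to any prefix $[s]$ of the sorted buyers forms a consecutive interval; only then is the interval-per-segment DP state sound. The exchange arguments of Lemmas~\ref{lem:con} and~\ref{lem:subcon} are entirely local to a single-peak quality profile, so I would replay them inside each segment, the extra bookkeeping being to confirm that no exchange moves a block across a valley (which cannot help, since every cross-valley block has already been enumerated). The second point requiring care is exhaustiveness: I must argue that every feasible allocation is consistent with exactly one of the guesses, so that maximizing over guesses genuinely maximizes the virtual surplus. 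Granting these structural facts, the enumeration-plus-DP procedure computes a virtual-surplus-maximal allocation in polynomial time; combined with payment~(\ref{eq:single-price}) and the monotonicity argument inherited from Theorem~\ref{thm:monotone}, this yields the claimed polynomial-time optimal BIC mechanism.
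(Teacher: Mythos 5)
Your proposal matches the paper's own proof essentially step for step: the same enumeration of which buyer (if any) occupies each valley slot and at which offset (the paper counts $\sum_i d_i+1\le mn+1$ choices per valley, giving $(mn+1)^h$ guesses, just as you do), the same decomposition into single-peak segments, and the same enlarged dynamic program over per-segment intervals $[l_i,r_i]$ generalizing~(\ref{equation:dp}). The structural points you flag as needing care---segment-wise analogues of Lemmas~\ref{lem:con} and~\ref{lem:subcon}, and inheritance of the monotonicity argument from Theorem~\ref{thm:monotone}---are exactly the ingredients the paper invokes as well (there without further elaboration), so your argument is correct and not a different route.
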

\begin{proof}
  Our proof is based on the single peak algorithm. Assume there are $h$
  peaks, then there must be $h-1$ valleys. Suppose these valleys are
  indexed $j_1,j_2,\cdots,j_{h-1}$. In optimal allocation, for any
  $j_k$, $k=1,2,\cdots,h-1$, $j_k$ must be allocated to a buyer or
  unassigned to any buyer. If $j_k$ is assigned to a buyer, say, buyer
  $i$, since $i$ would buy $d_i$ consecutive slots, $j_k$ may appear
  in $\ell$th position of this $d_i$ consecutive slots. Hence, by this
  brute force, each $j_k$ will at most have $\sum_i d_i+1\le mn+1$
  possible positions to be allocated. In all, all the valleys have
  $(mn+1)^h$ possible allocated positions. For each of this
  allocation, the slots is broken into $h$ single peak slots. We can obtain similar properties as those in Lemma \ref{lem:con} and \ref{lem:subcon}.
  Without loss of generality, suppose the rest buyers are still the set $[n]$, with non-increasing virtual value.
  Since the optimal solution always assigns to $[s]$ consecutively, we can boil the
allocations to $[s]$ down to  intervals denoted by $[l_i,r_i]$, $i=1,2,\cdots,d$, where $[l_i,r_i]$ lies in the $i$-th single peak slot.  Let
$g[s,l_1,r_1,\cdots,l_d,r_d]$ denote the maximized value of our objective function
$\sum_i\phi_i(v_i)t_i(\vec{v})$ when we only consider first $s$ buyers
and the allocations of $[s]$ are exactly intervals $[l_i,r_i]$, $i=1,2,\cdots,d$. Then we
have the following transition function.
\begin{equation*}
g[s,l_1,r_1,\cdots,l_d,r_d]=\max_{i\in[d]}\left\{
\begin{array}{l}
g[s-1,l_1,r_1,\cdots,l_d,r_d]\\\\
g[s-1,l_1,r_1,\cdots,l_i,r_i-d_s,\cdots,l_d,r_d]+\phi_s(v_s)\sum_{j=r_i-d_s+1}^{r_i}q_j\\\\
g[s-1,l_1,r_1,\cdots,l_i+d_s,r_i,\cdots,l_d,r_d]+\phi_s(v_s)\sum_{j=l_i}^{l_i+d_s-1}q_j
\end{array}
\right.
\end{equation*}
\end{proof}

Now we consider the case without the constant peak assumption and
prove the following hardness result.

\begin{theorem}(NP-Hardness)
\label{thm:hard}
  The revenue maximization problem for rich media ads with arbitrary
  qualities is NP-hard.
\end{theorem}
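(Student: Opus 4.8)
The plan is to prove NP-hardness by a polynomial-time reduction from \textsc{3-Partition}, which is strongly NP-hard; because its input numbers may be encoded in unary as slot multiplicities, the reduction still produces a polynomial-size instance. By Lemma~\ref{lem:revenue} the optimal Bayesian revenue equals the maximum virtual surplus $\Ev[\sum_i\phi_i(v_i)t_i(\vec{v})]$, and by choosing all buyers' value distributions so that their virtual value functions are identically equal to a common positive constant, the virtual surplus becomes proportional to $\sum_i t_i$. Thus computing the optimal revenue reduces to the purely combinatorial allocation problem: given demands $d_i$ and slot qualities $q_j$, place disjoint consecutive blocks (block $i$ of length $d_i$) so as to maximise the total covered quality $\sum_i t_i$. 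I will show this coverage problem is NP-hard; since every pricing model in the paper must first solve it, revenue maximisation is NP-hard.

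Given a \textsc{3-Partition} instance with $3k$ integers $a_1,\dots,a_{3k}$, each strictly between $B/4$ and $B/2$ with $\sum_\ell a_\ell=kB$, I construct the slot sequence as $k$ consecutive ``bins'', each a run of $B$ slots of quality $1$, with consecutive bins separated by a single ``wall'' slot of quality $0$; this gives $kB+(k-1)$ slots and a quality profile with $\Theta(k)$ peaks. For each integer $a_\ell$ I create a buyer with demand $d_\ell=a_\ell$ and unit value. The instance is genuinely multi-peak, which is essential: the dynamic programs of Theorems~\ref{thm:monotone} and~\ref{thm:multipeak} run in time polynomial in the number of slots precisely because the number of peaks is constant, so they do not apply here, matching the intuition that hardness can only arise from arbitrary qualities.

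The correctness claim is that the maximum coverage equals $kB$ if and only if the \textsc{3-Partition} instance is a YES instance. One direction is immediate: a partition into $k$ triples summing to $B$ packs the corresponding blocks into the bins and covers every quality-$1$ slot, giving coverage $kB$. For the converse, note that the total block length is $\sum_\ell a_\ell=kB$, exactly the number of quality-$1$ slots; any served buyer covers at most $d_\ell=a_\ell$ quality-$1$ slots, with equality only if its block contains no wall. Hence coverage $kB$ forces every buyer to be served and no block to cover a wall, so each block lies inside a single bin and the blocks tile every bin exactly. Since each $a_\ell\in(B/4,B/2)$, a bin can sum to exactly $B$ only with exactly three blocks, which yields the required triples.

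The main obstacle is this converse step, namely the accounting argument showing that an optimal allocation can never profit from letting a block straddle a wall or from leaving high-quality slots uncovered, so that the value $kB$ is attainable only by an exact, wall-respecting tiling. A secondary point to pin down is the passage from the coverage problem back to the revenue objective: I must exhibit value distributions (or, in the full-information envy-free and competitive-equilibrium models, values) for which the optimal revenue is a fixed monotone function of the maximal coverage, so that a polynomial-time revenue oracle would decide \textsc{3-Partition}. Finally, I would remark that the same construction establishes hardness of welfare and virtual-surplus maximisation as well, making the result robust across all three mechanisms studied.
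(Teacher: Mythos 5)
Your proposal is correct and follows essentially the same route as the paper: a reduction from strongly NP-hard \textsc{3-Partition} using unit-value buyers with demands $d_\ell=a_\ell$, deterministic (constant) virtual values so that revenue maximization collapses to maximizing covered quality, and a slot profile consisting of runs of $B$ quality-$1$ slots separated by single quality-$0$ ``wall'' slots. In fact your converse accounting argument (total block length equals the number of quality-$1$ slots, so coverage $kB$ forces every buyer to be served inside a single bin, yielding an exact tiling and, via $a_\ell\in(B/4,B/2)$, triples) spells out the details that the paper compresses into ``it is not hard to see.''
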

\begin{proof}
 We prove the NP-hardness by reducing the $3$ partition problem that
  is to decide whether a given multi-set of integers can be partitioned
  into certain number of subsets that all have the same sum. More precisely, given a
  multi-set $S$ of $3n$ positive integers, can $S$ be partitioned
  into $n$ subsets $S_1,\ldots, S_n$ such that the sum of the numbers
  in each subset is equal? The $3$ partition problem has been proven
  to be NP-complete in a strong sense in \cite{GJ75}, meaning that it
  remains NP-complete even when the integers in $S$ are bounded above
  by a polynomial in $n$.

  Given a instance of $3$ partition $(a_1,a_2,\ldots,a_{3n})$, we
  construct a instance for advertising problem with $3n$ advertisers
  and $m=n+\sum_ia_i$ slots. It should be mentioned that $m$ is
  polynomial of $n$ due to the fact that all $a_i$ are bouned by a
  polynomial of $n$. In the advertising instance, the valuation $v_i$
  for each advertiser $i$ is $1$ and his demand $d_i$ is defined as
  $a_i$. Moreover, for any advertiser, his valuation distribution is
  that $v_i=1$ with probability $1$. Then everyone's virtual value is
  exactly $1$. To maximize revenue is
  equivalent to maximize the simplified function
  $\sum_i\sum_jx_{ij}q_j$.

  Let $B=\sum_ia_i/n$. We define the quality of slot $j$ is $0$ if $j$
  is times of $B+1$, otherwise $q_j=1$. That can be illustrated as
  follows.
  \begin{displaymath}
    \underbrace{1\,1\cdots 1}_{B}\,0\,\underbrace{1\,1\cdots 1}_{B}\,0 \ldots\underbrace{1\,1\cdots 1}_{B}\,0
  \end{displaymath}

  It is not hard to see that the optimal revenue is $\sum_ia_i$ iff
  there is a solution to this $3$ partition instance.
\end{proof}

\section{Competitive Equilibrium}
\label{sec:eq}
In this section, we study the revenue maximizing competitive
equilibrium in the full information setting. To simplify the following
discussions, we sort all buyers and items in non-increasing order of
their values, i.e., $v_1\geq v_2\geq\cdots\geq v_n$.

We say an allocation $\vec{Y}=(Y_1,Y_2,\cdots,Y_n)$ is efficient if
$\vec{Y}$ maximizes the total social welfare e.g. $\sum_i\sum_{j\in
  Y_i}v_{ij}$ is maximized over all the possible allocations. We call
$\vec{p}=(p_1,p_2,\cdots,p_m)$ an equilibrium price if there exists
an allocation $\vec{X}$ such that $(\vec{X},\vec{p})$ is a
competitive equilibrium. The following lemma is implicitly stated in~\cite{GS99},
for completeness, we give a proof below.
\begin{lemma}\label{lem:CE}
  Let allocation $\vec{Y}$ be efficient, then for any equilibrium
  price $\vec{p}$, $(\vec{Y},\vec{p})$ is a competitive
  equilibrium.
\end{lemma}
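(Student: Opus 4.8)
The plan is to prove the lemma by a welfare-accounting argument of the kind used in the first/second welfare theorems. Fix the efficient allocation $\vec{Y}$ and an equilibrium price $\vec{p}$, and let $(\vec{X},\vec{p})$ be a competitive equilibrium witnessing that $\vec{p}$ is an equilibrium price. For each buyer $i$ I would write $u_i^*(\vec{p})$ for the best utility $i$ can obtain at prices $\vec{p}$, namely the larger of $0$ and $\max_{T}\sum_{j\in T}(v_{ij}-p_j)$ over all consecutive blocks $T$ of size $d_i$. The first observation to record is that consecutive envy-freeness of $(\vec{X},\vec{p})$ is exactly the statement that every buyer attains this optimal utility: by Definition~\ref{Def-EF}, a winner receives a size-$d_i$ consecutive block of utility $u_i^*(\vec{p})\ge 0$, and a loser has $u_i^*(\vec{p})=0$. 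I would also use, consistently with unsold items being priced at zero, that all prices are nonnegative.

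The core is a single welfare identity. For any feasible allocation $\vec{Z}$, splitting each buyer's value into utility plus payment gives $\mathrm{SW}(\vec{Z})=\sum_i u_i(\vec{Z},\vec{p})+R(\vec{Z})$, where $R(\vec{Z})$ denotes the total price of the items sold under $\vec{Z}$. Since each bundle $Z_i$ is either a consecutive block of size $d_i$ or empty, we have $u_i(\vec{Z},\vec{p})\le u_i^*(\vec{p})$, and since prices are nonnegative, $R(\vec{Z})\le\sum_j p_j$; hence $\mathrm{SW}(\vec{Z})\le\sum_i u_i^*(\vec{p})+\sum_j p_j$ for every $\vec{Z}$. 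Applying the first observation to the equilibrium $\vec{X}$, both inequalities become equalities — each buyer gets $u_i^*(\vec{p})$, and every unsold item is priced zero so $R(\vec{X})=\sum_j p_j$ — giving $\mathrm{SW}(\vec{X})=\sum_i u_i^*(\vec{p})+\sum_j p_j$.

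Now I would feed $\vec{Y}$ into the same identity. Efficiency of $\vec{Y}$ gives $\mathrm{SW}(\vec{Y})\ge\mathrm{SW}(\vec{X})=\sum_i u_i^*(\vec{p})+\sum_j p_j$, while the general bound gives $\mathrm{SW}(\vec{Y})\le\sum_i u_i^*(\vec{p})+\sum_j p_j$. Equality of the totals therefore forces equality in each of the two inequalities used for $\vec{Y}$: first, $u_i(\vec{Y},\vec{p})=u_i^*(\vec{p})$ for every buyer $i$; and second, the total price of the items left unsold by $\vec{Y}$ is zero, which together with nonnegativity yields $p_j=0$ for every such item. The latter is precisely condition (ii) of competitive equilibrium. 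Finally I would translate $u_i(\vec{Y},\vec{p})=u_i^*(\vec{p})$ back into consecutive envy-freeness: if $Y_i\neq\emptyset$ it is a size-$d_i$ consecutive block attaining the maximum utility $u_i^*(\vec{p})\ge 0$, so $i$ envies no consecutive block and has nonnegative utility; if $Y_i=\emptyset$ then $u_i^*(\vec{p})=0$, so every consecutive block gives nonpositive utility, which is the losing-buyer condition. Hence $(\vec{Y},\vec{p})$ is a competitive equilibrium.

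The only genuinely delicate point is the tightness step — extracting the per-buyer and per-item equalities from one scalar identity. This works precisely because the global bound is assembled term by term from one-signed contributions, so equality of the sum propagates to equality of each term; this is also the place where nonnegativity of prices is essential, since the conclusion $p_j=0$ on unsold items would otherwise fail.
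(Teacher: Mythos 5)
Your proof is correct and takes essentially the same route as the paper: both decompose social welfare into buyer utilities plus payments, use envy-freeness of the witnessing equilibrium $(\vec{X},\vec{p})$ together with efficiency of $\vec{Y}$ to trap the two totals between matching bounds, and then propagate the resulting equality term by term to obtain per-buyer optimality under $\vec{Y}$ and zero prices on $\vec{Y}$'s unsold items. The only cosmetic differences are your explicit notation $u_i^*(\vec{p})$ for the optimal utility at prices $\vec{p}$ and your explicit appeal to nonnegative prices when concluding $p_j=0$ item by item, a point the paper leaves implicit after deriving $\sum_{j\in T}p_j=0$.
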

\begin{proof}
  Since $\vec{p}$ is an equilibrium price, there exists an
  allocation $\vec{X}$ such that $(\vec{X},\vec{p})$ is a
  competitive equilibrium. As a result, by envy-freeness,
  $u_i(\vec{X},\vec{p})\ge u_i(\vec{Y},\vec{p})$ for any
  $i\in [n]$. Let $T=[m]\backslash \cup_iY_i$, then we have
\begin{eqnarray}
&&\sum_i\sum_{j\in Y_i}v_{ij}-\sum_{j=1}^mp_j \ge \sum_i\sum_{j\in X_i}v_{ij}-\sum_{j=1}^mp_j=\sum_i\sum_{j\in X_i}v_{ij}-\sum_i\sum_{j\in X_i}p_j\nonumber\\
&=&\sum_iu_i(\vec{X},\vec{p})\ge \sum_iu_i(\vec{Y},\vec{p})=\sum_i\sum_{j\in Y_i}v_{ij}-\sum_i\sum_{j\in Y_i}p_j\nonumber\\
&=&\sum_i\sum_{j\in Y_i}v_{ij}-\sum_{j=1}^mp_j+\sum_{j\in T}p_j
\end{eqnarray}
where the first inequality is due to $\vec{Y}$ being efficient and
first equality  due to $u_i(\vec{X},\vec{p})$ being competitive
equilibrium (unallocated item priced at $0$). Therefore, $\sum_{j\in
 T}p_j=0$ and the above inequalities are all equalities. $\forall i:
u_i(\vec{X},\vec{p})= u_i(\vec{Y},\vec{p})$. Further,
because the price is the same,

$\forall i$ a loser $\forall Z$ consecutive items and $|Z|=d_i$, we have $u_i(Z)\leq 0$.

$\forall i$ a winner $\forall Z$ consecutive items and $|Z|=d_i$, we have
$$u_i(Y_i)=u_i(X_i)\geq u_i(Z).$$
Therefore, $(\vec{Y},\vec{p})$ is a competitive equilibrium.
\end{proof}
By Lemma~\ref{lem:CE}, to find a revenue maximizing competitive
equilibrium, we can first find an efficient allocation and then use
linear programming to settle the prices.  We develop the following
dynamic programming to find an efficient allocation.  We first only
consider there is one peak in the quality order of items.  The case
with constant peaks is similar to the above approaches, for general
peak case, as shown in above Theorem~\ref{thm:hard}, finding one
competitive equilibrium is NP-hard if the competitive equilibrium
exists, and determining existence of competitive equilibrium is also
NP-hard. This is because that considering the instance in the proof of
Theorem~\ref{thm:hard}, it is not difficult to see the constructed
instance has an equilibrium if and only if 3 partition has a
solution.

Recall that all the values are sorted in non-increasing order e.g. $v_1\ge v_2
\ge\cdots\ge v_n$. $g[s,l,r]$ denotes the maximized value of social
welfare when we only consider first $s$ buyers and the allocation of
$s$ is exactly the interval $[l,r]$. Then we have the following transition function.
\begin{equation}
\label{eq:CE}
g[s,l,r]=\max\left\{
\begin{array}{l}
g[s-1,l,r]\\\\
g[s-1,l,r-d_s]+v_s\sum_{j=r-d_s+1}^{r}q_j\\\\
g[s-1,l+d_s,r]+v_s\sum_{j=l}^{l+d_s-1}q_j
\end{array}
\right.
\end{equation}
By tracking procedure~\ref{eq:CE}, an efficient allocation denoted by
$\vec{X}^*=(X^*_1,X^*_2,\cdots,X^*_n)$ can be found. The price
$\vec{p}^*$ such that $(\vec{X}^*,\vec{p}^*)$ is a revenue maximization
competitive equilibrium can be determined from the following linear
programming. Let $T_i$ be any consecutive number of $d_i$ slots, for
all $i\in [n]$.
\begin{align*}
  \max \quad & \sum_{i\in [n]}\sum_{j\in X^*_i}p_j &  \\
  s.t. \quad & p_j \ge 0 & \forall\ j\in [m] \\
  & p_j = 0 & \forall\ j\notin \cup_{i\in [n]}X^*_i   \\
  & \sum_{j\in X^*_i}(v_i\qual_j-p_j)\ge \sum_{j'\in T_i}(v_i\qual_{j'}-p_{j'}) & \forall\ i\in [n] \\
  & \sum_{j\in X^*_i}(v_i\qual_j-p_j)\ge 0 &\forall i\in[n]
\end{align*}

Clearly there is only a polynomial number of constraints. The
constraints in the first line represent that all the prices are non negative
(no positive transfers). The constraint in the second line means
unallocated items must be priced at zero (market clearance condition).
And the constraint in the third line contains two aspects of information.
First for all the losers e.g. loser $k$ with
$X_k=\emptyset$, the utility that $k$ gets from any consecutive number of
$d_k$ is no more than zero, which makes all the losers envy-free.
The second aspect is that the winners e.g. winner $i$ with $X_i\neq
\emptyset$ must receive a bundle with $d_i$ consecutive slots
maximizing its utility over all $d_i$ consecutive slots, which
together with the constraint in the fourth line (winner's utilities are non
negative) guarantees that all winners are envy-free.
\begin{theorem}\label{Thm-CE}
  Under the condition of a constant number of peaks in the qualities of slots,
  there is a polynomial time algorithm to decide whether there exists
  a competitive equilibrium or not and to compute a revenue maximizing
  revenue market equilibrium if one does exist.
  If the number of peaks in the qualities of the slots is unbounded, both the problems
  are NP-complete.
\end{theorem}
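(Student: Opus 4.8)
The plan is to prove the two assertions separately: the positive one rests directly on Lemma~\ref{lem:CE} together with the linear program stated just before the theorem, while the negative one recycles the $3$-partition construction from the proof of Theorem~\ref{thm:hard}.

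For the constant-peak case I would first compute an \emph{efficient} allocation $\vec{X}^*$. For a single peak this is exactly the dynamic program~(\ref{eq:CE}); for $h=O(1)$ peaks I would, as in the proof of Theorem~\ref{thm:multipeak}, enumerate the $(mn+1)^{h}$ placements of the buyers occupying the valleys, break the line into single-peak pieces, and run the enlarged dynamic program for each guess, keeping the best. This is polynomial. The crucial structural fact is Lemma~\ref{lem:CE}: a price vector $\vec{p}$ is an equilibrium price if and only if $(\vec{X}^*,\vec{p})$ is a competitive equilibrium. Hence the set of equilibrium prices is precisely the feasible region of the linear program preceding the theorem, whose constraints encode nonnegativity, market clearance, winner individual rationality, and the consecutive envy-free inequalities for winners and losers against the fixed allocation $\vec{X}^*$. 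Therefore a competitive equilibrium exists iff this linear program is feasible, which is decidable in polynomial time. Moreover, since unsold items are priced at $0$, the revenue of \emph{any} competitive equilibrium equals $\sum_j p_j$, and by Lemma~\ref{lem:CE} every equilibrium price pairs with $\vec{X}^*$; so the optimum of the linear program is exactly the maximum-revenue competitive equilibrium. This settles the first assertion.

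For the unbounded-peak case I would first note membership in NP: a competitive equilibrium is certified by an allocation together with a price vector, and because the prices meeting the envy-free conditions form a polytope with polynomially many constraints, a feasible/optimal vertex has polynomial bit-length, while the (polynomially many) consecutive envy-free and market-clearance conditions are verified directly. For hardness I would reuse the instance of Theorem~\ref{thm:hard} built from a $3$-partition input $(a_1,\dots,a_{3n})$ with $B=\sum_i a_i/n$: all values equal $1$, demands $d_i=a_i$, and qualities given by $n$ blocks of $B$ ones separated by single zeros. The goal is to show this instance admits a competitive equilibrium iff the $3$-partition instance is solvable. The easy direction is that a partition yields a CE: tiling each length-$B$ block of ones exactly by one part gives social welfare $nB=\sum_i a_i$ with no loser, and pricing every quality-$1$ slot at $1$ and every zero at $0$ makes every consecutive block of $d_i$ slots yield utility exactly $0$, so all buyers are consecutive envy-free and market clearance holds.

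The hard part, and the main obstacle, is the converse: when no $3$-partition exists, no competitive equilibrium exists. By Lemma~\ref{lem:CE} it suffices to rule out the efficient allocation being part of a CE. When the partition fails, the maximum welfare is strictly below $nB$ (under the standard range assumption $a_i\in(B/4,B/2)$ any parts summing to $B$ form a triple, so filling all $n$ blocks without any buyer straddling a zero is equivalent to solving $3$-partition); hence at least one quality-$1$ slot is unsold and priced at $0$ by market clearance. I would then argue that this free slot is fatal: the consecutive envy-free inequality of the winner whose block abuts it forces that winner's boundary slot also to be priced $0$, and iterating the envy-free inequalities propagates low prices along the line until some winner whose block is \emph{forced to straddle a zero} can relocate to an all-ones window of the same length and strictly gain, contradicting envy-freeness. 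This is exactly the arbitrage phenomenon underlying the small non-existence instance of Example~\ref{ex-max-CE-not-exist}, and formalizing its propagation for an \emph{arbitrary} efficient allocation (where the free slot need not sit at a clean boundary) is the step requiring the most care. Finally, NP-hardness of computing a maximum-revenue competitive equilibrium follows since such an algorithm would in particular decide existence.
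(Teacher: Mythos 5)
Your proposal matches the paper's proof essentially step for step: for the positive part you compute an efficient allocation by the dynamic program~(\ref{eq:CE}) (extended to constantly many peaks by enumerating valley placements) and then settle prices by the linear program, with Lemma~\ref{lem:CE} giving exactly the paper's argument that feasibility of that LP for the fixed efficient allocation $\vec{X}^*$ is equivalent to existence of any competitive equilibrium and that its optimum is the maximum-revenue equilibrium; for the hardness part you reuse the $3$-partition instance of Theorem~\ref{thm:hard}, precisely as the paper does. The one step you flag as incomplete---that no competitive equilibrium exists when the $3$-partition instance is unsolvable---is also left unproved in the paper, which merely asserts that ``it is not difficult to see'' the equivalence, so your write-up (which additionally verifies NP membership and the direction that a $3$-partition yields an equilibrium via unit prices on quality-one slots) is, if anything, more detailed than the paper's own proof.
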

\begin{proof}
  Clearly the above linear programming and procedure (\ref{eq:CE}) run
  in polynomial time. If the linear programming output a price
  $\vec{p}^*$, then by its constraint conditions,
  $(\vec{X}^*,\vec{p}^*)$ must be a competitive equilibrium. On the
  other hand, if there exist a competitive equilibrium
  $(\vec{X},\vec{p})$ then by Lemma~\ref{lem:CE},
  $(\vec{X}^*,\vec{p})$ is a competitive equilibrium, providing a
  feasible solution of above linear programming. By the objective of
  the linear programming, we know it must be a revenue maximizing one.
\end{proof}
$\quad$

\section{Consecutive Envy-freeness}\label{sec:envy}
We first prove a negative result on computing the revenue maximization problem in general demand case. We show it is NP-hard even if all the qualities are the same.
\begin{theorem}\label{Thm-cons-2}
The revenue maximization problem of consecutive envy-free buyers is NP-hard even if all the qualities are the same.
\end{theorem}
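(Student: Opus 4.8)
The plan is to prove NP-hardness by a reduction from a strongly NP-hard packing problem, namely $3$-Partition, exploiting the fact that once all qualities are equal the only remaining freedom lies in the demands, the values, and the prices. I would first normalize the common quality to $q_j=1$, so that buyer $i$ values every slot at $v_i$ and wants any block of $d_i$ consecutive slots. Note that, unlike the competitive equilibrium setting, unsold slots need not be priced at zero here, so prices are a genuinely free parameter constrained only by envy-freeness.

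Before the reduction, the crucial analytical step is to rewrite the consecutive envy-free conditions in this restricted setting. Writing $P(T)=\sum_{j\in T}p_j$ for a block $T$ and $M_d=\min_T P(T)$ over all length-$d$ windows $T$, condition (ii) of Definition~\ref{Def-EF} forces any winner $i$ to occupy a cheapest window of its length, so that $i$ pays exactly $M_{d_i}$ and must satisfy $M_{d_i}\le d_iv_i$; a loser $i$ instead requires $M_{d_i}\ge d_iv_i$. Hence the total revenue equals $\sum_d (\#\text{ winners of demand } d)\cdot M_d$, and the problem reduces to choosing prices together with a disjoint placement of winner-blocks so that every winner sits in a minimum-price window of its length while the per-demand thresholds hold. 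This reformulation is what I would build the gadget on.

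For the construction I would encode the integers $a_1,\dots,a_{3n}$ (with $\sum_i a_i=nB$) as the demands of $3n$ ``number buyers'', all assigned a common small value, and then carve the line into $n$ capacity-$B$ ``bins'' by inserting forced separator slots. The separators are realized by unit-demand ``wall buyers'' whose value is chosen so large that making one a loser would require \emph{every} single slot to be priced above that value, which is incompatible with any number buyer winning; the wall buyers are therefore forced to win and to occupy the separator positions, cutting the line into gaps of size exactly $B$. With a suitable revenue target, exceeding it forces every number buyer to win, which in turn forces the demands $a_i$ to be packed disjointly into the $n$ gaps of size $B$, i.e. a solution to $3$-Partition; conversely, a valid partition yields such a placement together with a consistent envy-free price vector meeting the target.

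The main obstacle is the coupling created by the minimum-window equalities: because overlapping windows of different lengths share slots, and because all winners of a common demand must pay the same value $M_d$, the prices cannot be chosen independently bin by bin. I would therefore expect the bulk of the work to lie in choosing the values at well-separated scales so that (i) the wall buyers are genuinely forced to win, (ii) each number buyer's threshold $M_{d_i}\le d_iv_i$ pins down the relevant min-windows tightly, and (iii) these min-window constraints propagate \emph{exactly} to the equal-bin packing condition. The more delicate half of the correctness argument is the impossibility direction: ruling out a ``cheating'' price/allocation that reaches the revenue target without corresponding to a true partition, which is where the scale separation between wall and number values, and the rigidity of the minimum-window payments, must be used carefully.
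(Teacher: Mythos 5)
Your opening reformulation is correct and is a genuinely useful observation: with equal qualities, condition (ii) of Definition~\ref{Def-EF} forces every winner to occupy a minimum-price window of its length and to pay $M_{d_i}$, while a loser needs $M_{d_i}\ge d_i v_i$. But this very reformulation is fatal to your gadget, and the obstacle you flag at the end is not a technical hurdle to be engineered away -- it breaks the reduction outright. Say there are $k$ wall buyers of value $V$ and the number buyers have value $v$. First, whenever any number buyer wins, its block has average price at most $v$, so the globally cheapest slot costs at most $v$, and therefore \emph{every} wall buyer pays $M_1\le v$: wall buyers can never simultaneously be ``forced to win'' by a large $V$ and contribute meaningful revenue, so scale separation works against you. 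Second, and decisively, nothing ever pins down positions. Price every slot at exactly $v$ and place all winners' blocks disjointly in any order whatsoever (e.g., all number buyers packed contiguously on the left, all wall buyers on the right): every length-$d$ window then costs the same $dv$, so this is envy-free with every buyer winning, and it attains revenue $nBv+kv$, which is the maximum achievable in any solution in which number buyers win (since $M_{a_i}\le a_i v$ and $M_1\le v$). Alternatively, pricing every slot at $V$ with only wall buyers winning is envy-free with revenue $kV$. Hence the optimal revenue of your instance equals $\max\left(kV,\ nBv+kv\right)$ \emph{whether or not} the $3$-Partition instance is solvable; there is no gap between yes- and no-instances, for any choice of the scales $v,V$.

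The idea you are missing -- and which the paper uses -- is that the separators must be \emph{unsold} slots, not slots won by high-value buyers. As you yourself noted, in an envy-free solution (unlike a competitive equilibrium) unsold slots may carry arbitrary prices; priced astronomically, they act as walls that block envy across bins while their prices are completely decoupled from any winner's payment $M_d$. The paper adds a single extra buyer with per-slot value $2$ and demand $B+1$, alongside the $3n$ number buyers of value $1$ and demands $a_i$, on $m=(B+1)+n+nB$ unit-quality slots. Reaching revenue $nB+2(B+1)$ forces every buyer to win at full value; the big buyer's envy constraint then forces every $(B+1)$-window to cost at least $2(B+1)$, while number buyers pay on average $1$ per slot, so no run of number-buyer slots can have length exceeding $B$ (a run of length $L$ with $B+1\le L\le 2B$ would contain a $(B+1)$-window costing at most $L<2(B+1)$, and longer runs fail by summing disjoint windows). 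With only the $n$ unsold slots and the big buyer's block available as separators, the $nB$ number-buyer slots must split into runs of size $B$ tiled exactly by the demands, i.e., a $3$-Partition; conversely a $3$-Partition yields such a solution. It is this interplay of one long, higher-valued demand with freely priced unsold slots -- rather than forced positions of unit-demand winners -- that creates the yes/no revenue gap your construction lacks.
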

\begin{proof}
 We prove the NP-hardness by reducing the $3$ partition problem that
  is to decide whether a given multi-set of integers can be partitioned
  into certain number of subsets that all have the same sum. More precisely, given a
  multi-set $S$ of $3n$ positive integers, can $S$ be partitioned
  into $n$ subsets $S_1,\ldots, S_n$ such that the sum of the numbers
  in each subset is equal? The $3$ partition problem has been proven
  to be NP-complete in a strong sense in \cite{GJ75}, meaning that it
  remains NP-complete even when the integers in $S$ are bounded above
  by a polynomial in $n$.

  Given a instance of $3$ partition $(a_1,a_2,\ldots,a_{3n})$. Let $B=\sum_ia_i/n$. we
  construct a instance for advertising problem with $3n+1$ advertisers
  and $m=B+1+n+\sum_ia_i$ slots. It should be mentioned that $m$ is
  polynomial of $n$ due to the fact that all $a_i$ are bounded by a
  polynomial of $n$. In the advertising instance, the valuation $v_i$
  for each advertiser $i$ is $1$ and his demand $d_i$ is defined as
  $a_i$ and there is another buyer with valuation $2$ for each slot and with demand $B+1$.
  The quality of each slot $j$ is $1$.   It is not hard to see that the optimal revenue is $nB+2(B+1)$ if and only if
  there is a solution to this $3$ partition instance, the optimal solution is illustrated as follows.
\begin{displaymath}
    \underbrace{1\,1\cdots 1}_{B+1}\,\underbrace{1}_{\text{unassigned}}\,\underbrace{1\,1\cdots 1}_{B}\,\underbrace{1}_{\text{unassigned}}\,\underbrace{1\,1\cdots 1}_{B}\,\underbrace{1}_{\text{unassigned}} \ldots\underbrace{1\,1\cdots 1}_{B}\
\end{displaymath}
\end{proof}
Although the hardness in Theorem~\ref{Thm-cons-2} indicates that finding the optimal revenue for general demand in polynomial time is impossible , however, it doesn't rule out the very important case where the demand is uniform, e.g. $d_i=d$. We assume slots are in a decreasing order from top to bottom, that is, $q_1\geq q_2\ge \cdots \ge q_m$ . The result is summarized as follows.
\begin{theorem}\label{Thm-cons-3}
There is a polynomial time algorithm to compute the consecutive envy-free solution when all the buyers have the same demand and slots are ordered from top to bottom.
\end{theorem}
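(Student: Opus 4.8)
The plan is to first pin down the combinatorial shape of an optimal solution by exchange arguments, then reduce revenue to an additive objective over the chosen blocks, optimize it by a dynamic program in the spirit of~(\ref{eq:CE}), and finally construct explicit per-item prices that realize the required block prices.

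\emph{Step 1 (allocation structure).} Sort the buyers by value, $v_1 \ge \cdots \ge v_n$. I would first show that the set of winners may be taken to be a prefix $[k]=\{1,\dots,k\}$: if some buyer $i$ wins a block $X_i$ while a higher buyer $i'<i$ loses, then the loser condition of Definition~\ref{Def-EF} forces $p(X_i):=\sum_{j\in X_i}p_j \ge v_{i'}Q(X_i)\ge v_iQ(X_i)$, contradicting the winner's individual rationality $p(X_i)\le v_iQ(X_i)$ (unless $v_{i'}=v_i$, a tie I break in favour of the prefix), where $Q(X_i)=\sum_{j\in X_i}q_j$. A rearrangement argument (swapping the blocks of two winners whose value order disagrees with their quality order weakly increases welfare while preserving consecutiveness, exactly as in Lemmas~\ref{lem:con} and~\ref{lem:subcon}) then shows the $k$ winners can be assigned $d$-blocks assortatively, buyer $1$ receiving the topmost block.

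\emph{Step 2 (revenue as an additive objective).} For a fixed assortative allocation with block qualities $Q_1>\cdots>Q_k$, write $u_i$ for the utility of winner $i$; then $p(X_i)=v_iQ_i-u_i$, so the revenue is $\sum_i v_iQ_i-\sum_i u_i$ and maximizing revenue is minimizing total utility subject to envy-freeness. Because qualities decrease down the line while values decrease with $i$, I expect a single-crossing argument to show that only the adjacent winner-envy constraints bind, giving $u_k=0$ and $u_i=u_{i+1}+(v_i-v_{i+1})Q_{i+1}$; the downward loser constraint $u_i\le (v_i-v_{k+1})Q_i$ then holds automatically, since $u_i=\sum_{j=i}^{k-1}(v_j-v_{j+1})Q_{j+1}\le (v_i-v_k)Q_i$. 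Substituting the closed form $\sum_i u_i=\sum_{j=1}^{k-1} j(v_j-v_{j+1})Q_{j+1}$ turns the revenue into a sum $\sum_i c_iQ_i$ of per-block terms, whose coefficients $c_i$ depend only on the precomputable value sequence.

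\emph{Step 3 (dynamic program and pricing).} Since the revenue is now additive in the block qualities and the blocks are disjoint, consecutive, and assortatively ordered, I would compute the best placement --- including the choice of $k$ and of any unallocated gaps --- by a dynamic program $h[s,r]$ giving the best value using winners $[s]$ within the first $r$ slots, with transitions that either place block $s$ as the $d$ slots ending at some position or leave a slot unused, paralleling~(\ref{eq:CE}); this runs in polynomial time. The remaining and most delicate step is to realize the block prices $p(X_i)=v_iQ_i-u_i$ by nonnegative per-item prices $p_j$ that also defeat envy toward the straddling windows $B_l$ overlapping two adjacent blocks: for each such window one needs $\sum_{j\in B_l}p_j\ge \max_b\bigl(v_bQ(B_l)-u_b\bigr)$, a lower bound on overlapping sums. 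I would meet all of these by concentrating price mass toward the lower slots of each block, allowing over-pricing as in Example~\ref{example-over-price}, and argue feasibility of the resulting transportation-type system, falling back on inserting a one-slot gap in the dynamic program whenever a boundary genuinely cannot be priced. The main obstacle is exactly this pricing feasibility for the straddling windows: showing that the adjacent-binding block prices can always be distributed across the individual items without violating nonnegativity, so that the overlap constraints never force the revenue below the additive optimum of Step 2.
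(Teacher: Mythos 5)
Your Steps 1--2 and the dynamic program in Step 3 essentially retrace the paper's route: relax consecutive envy-freeness to bundle envy-freeness, observe that only adjacent constraints bind so the maximal payments satisfy $P_L=v_LQ_L$ for the lowest winner and $P_i=v_i(Q_i-Q_{i+1})+P_{i+1}$ above him, rewrite the revenue as $\sum_i\alpha_iQ_i$ with $\alpha_i=iv_i-(i-1)v_{i-1}$ (this is exactly Theorem~\ref{Thm-bundle-EF} and LP~(\ref{EF-upper})), and optimize this additive objective by a DP. The problem is that your argument stops exactly where the real work begins. The step you yourself call ``the main obstacle'' --- distributing the block payments $P_i$ over individual slots so that no buyer envies a \emph{straddling} window overlapping two adjacent blocks --- is the entire content of the paper's proof of Theorem~\ref{Thm-cons-3}, and for it you offer only heuristics (``concentrate price mass toward the lower slots'', ``a transportation-type system'') with no feasibility proof. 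As it stands, the claim that the bundle-EF optimum is attainable under consecutive envy-freeness is asserted, not proved, so the optimality of your algorithm is not established.

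The paper closes this gap with an explicit construction that is different from your heuristic: writing $T_i=\{j^i_1,\dots,j^i_d\}$ for the slots of winner $i$ listed in matched positions, it prices coordinate-wise from the bottom block upward, $p_{j^L_k}=v_Lq_{j^L_k}$ and $p_{j^i_k}=v_i\bigl(q_{j^i_k}-q_{j^{i+1}_k}\bigr)+p_{j^{i+1}_k}$; that is, the $k$-th slot of block $i$ is priced against the $k$-th slot of block $i+1$, not against the physically adjacent slot and not by shifting mass downward. Induction gives $\sum_{k}p_{j^i_k}=P_i$, and a telescoping computation (the paper's Cases 1 and 2) verifies the envy inequality for every window consisting of $\ell$ slots of one winner and $d-\ell$ slots of his neighbour, using only $v_k\le v_i$ for windows below buyer $i$ and $v_k\ge v_i$ for windows above; over-priced slots, as in Example~\ref{example-over-price}, are exactly what this scheme produces and are legitimate. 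Note also that your fallback of ``inserting a one-slot gap whenever a boundary cannot be priced'' would be fatal rather than safe: altering the allocation forfeits the match with the bundle-EF upper bound, which is the only certificate of optimality in this approach. The paper's construction shows such gaps are never needed, and any gaps the DP itself chooses are harmless since, in an envy-free solution (unlike a competitive equilibrium), unsold slots may be priced arbitrarily high.
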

The proof of Theorem~\ref{Thm-cons-3} is based on bundle envy-free solutions, in fact we will prove the bundle envy-free solution is also a consecutive envy-free solution by defining price of items properly. Thus, we need first give the result on bundle envy-free solutions.
Suppose $d$ is the uniform demand for all the buyers. Let $T_i$ be the slot set allocated to buyer $i$, $i=1,2,\cdots,n$. Let $P_i$ be the total payment of buyer $i$ and $p_j$ be the price of slot $j$.
Let $t_i$ denote the total qualities obtained by buyer $i$, e.g. $t_i=\sum_{j\in T_i}q_j$ and $\alpha_i=iv_i-(i-1)v_{i-1}$, $\forall i\in [n]$.
\begin{theorem}\label{Thm-bundle-EF}
 The revenue maximization problem of bundle envy-freeness is equivalent to solving the following LP.
\begin{equation}\label{EF-upper}
\begin{aligned}
\textrm{Maximize:}\quad&\sum_{i=1}^n\alpha_it_i &\\
\textrm{s.t.}\quad & t_1\ge t_2\ge\cdots\ge t_n &\\
&T_i\subset [m], \ \ T_i\cap T_k=\emptyset \ \ \forall i,k\in [n]&
\end{aligned}
\end{equation}
\end{theorem}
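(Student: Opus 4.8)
The plan is to eliminate the prices entirely via a Myerson-style payment identity, so that the revenue objective over all bundle envy-free solutions collapses to a pure allocation problem, and then to check that the resulting linear objective is exactly $\sum_i \alpha_i t_i$. First I would pass from payments to utilities: writing $P_i$ for buyer $i$'s total payment and $u_i = v_i t_i - P_i$ for her utility, each bundle envy-free inequality $v_i t_i - P_i \ge v_i t_k - P_k$ becomes $u_i - u_k \ge (v_i - v_k)t_k$, supplemented by individual rationality $u_i \ge 0$. Since total revenue equals $\sum_i P_i = \sum_i v_i t_i - \sum_i u_i$, maximizing revenue for a \emph{fixed} allocation is the same as minimizing $\sum_i u_i$ over these constraints.

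The first substantive step is monotonicity. Adding the two envy constraints between $i$ and $k$ gives $(v_i - v_k)(t_i - t_k) \ge 0$, so under the ordering $v_1 \ge \cdots \ge v_n$ every bundle envy-free allocation automatically satisfies $t_1 \ge t_2 \ge \cdots \ge t_n$. This both justifies the monotonicity constraint appearing in the LP and shows that the achievable quality profiles are precisely the monotone ones induced by a valid disjoint allocation $T_1,\ldots,T_n$.

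The core step is the payment identity. For a fixed monotone allocation I would argue that $\min \sum_i u_i$ is attained when only the adjacent constraints bind, i.e. $u_n = 0$ and $u_i = u_{i+1} + (v_i - v_{i+1})t_{i+1}$, which telescopes to $u_i = \sum_{k=i}^{n-1}(v_k - v_{k+1})t_{k+1}$. The lower bound is immediate by backward induction along the adjacent constraints together with $u_n \ge 0$; the feasibility direction requires verifying that this particular profile satisfies \emph{all} envy constraints, which follows by telescoping and using $t_1 \ge \cdots \ge t_n$ (for $k>i$ one bounds each $t_{\ell+1} \ge t_k$ on the summation range, and for $k<i$ the inequality reverses). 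Substituting this minimizer back, the maximum revenue for the given allocation equals $\sum_i v_i t_i - \sum_{k=1}^{n-1} k(v_k - v_{k+1})t_{k+1}$; collecting the coefficient of each $t_i$ gives exactly $i v_i - (i-1)v_{i-1} = \alpha_i$, so the optimal revenue for a fixed allocation is $\sum_i \alpha_i t_i$. Combining this with the monotonicity step, optimizing revenue over all bundle envy-free solutions coincides with maximizing $\sum_i \alpha_i t_i$ over monotone disjoint allocations, which is precisely the stated LP; in particular losers correspond to $t_i = 0$ and contribute nothing, consistently with $P_i = 0$.

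The hard part will be the payment-identity step, specifically confirming that the adjacent envy constraints are the binding ones at the optimum and that the induced utility vector is globally feasible (satisfying the non-adjacent envy constraints and individual rationality simultaneously), since this is where monotonicity of the allocation must be invoked carefully. The reduction of the double sum to the coefficients $\alpha_i$ and the final equivalence statement are then routine bookkeeping.
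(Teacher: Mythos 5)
Your proposal is correct and takes essentially the same approach as the paper: both derive the monotonicity $t_1 \ge \cdots \ge t_n$ by summing pairwise envy constraints, identify the adjacent constraints as the binding ones (your utility recursion $u_i = u_{i+1} + (v_i - v_{i+1})t_{i+1}$ with $u_n = 0$ is exactly the paper's payment recursion $P_i = v_i(t_i - t_{i+1}) + P_{i+1}$, $P_n = v_n t_n$, rewritten in utility coordinates), verify that the resulting profile satisfies all non-adjacent envy constraints via the same monotonicity-based bounding, and telescope the double sum to obtain the coefficients $\alpha_i = i v_i - (i-1)v_{i-1}$. The only difference is cosmetic: you minimize total utility for a fixed allocation where the paper maximizes payments directly.
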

\begin{proof}[of Theorem \ref{Thm-bundle-EF}]
Recall $P_i$ denote the payment of buyer $i$, we next prove that the linear programming (\ref{EF-upper}) actually gives optimal solution of bundle envy-free. By the definition of bundle envy-free,
where buyer $i$ would not envy buyer $i+1$ and versus, we have
\begin{equation}\label{EF-ine-1}
v_it_i-P_i\ge v_it_{i+1}-P_{i+1}
\end{equation}
\begin{equation}\label{EF-ine-2}
v_{i+1}t_{i+1}-P_{i+1}\ge v_{i+1}t_i-P_i
\end{equation}
Plus above two inequalities gives us that $(v_i-v_{i+1})(t_i-t_{i+1})\ge 0$. Hence, if $v_i>v_{i+1}$, then $t_i\ge t_{i+1}$.
From (\ref{EF-ine-1}), we could get $P_i\le v_i(t_i-t_{i+1})+P_{i+1}$.
The maximum payment of buyer $i$ is
\begin{equation}\label{EF-ine-3}
P_i= v_i(t_i-t_{i+1})+P_{i+1},
\end{equation}
 together with $t_i\ge t_{i+1}$, implying (\ref{EF-ine-1}) and (\ref{EF-ine-2}). Besides the maximum payment of $n$ is $P_n=t_nv_n$. (\ref{EF-ine-3}) together with $t_i\ge t_{i+1}$ and $P_n=t_nv_n$ would make everyone bundle envy-free, the arguments are as follows.
\begin{itemize}
 \item All the buyers must be bundle envy free. By (\ref{EF-ine-3}), we have $P_i-P_{i+1}= v_i(t_i-t_{i+1})$, hence $P_i=\sum_{k=i}^{n-1}v_k(t_k-t_{k+1})+P_n$. Noticing that if $t_i=0$, then $P_i=0$, which means $i$ is loser. For any buyer $j<i$, we have $P_j-P_i=\sum_{k=j}^{i-1}v_k(t_k-t_{k+1})\le\sum_{k=j}^{i-1}v_j(t_k-t_{k+1})=v_j(t_j-t_i)$. rewrite $P_j-P_i\le v_j(t_j-t_i)$ as $v_jt_i-P_i\le v_jt_j-P_j$, which means buyer $j$ would not envy buyer $i$. Similarly, $P_j-P_i=\sum_{k=j}^{i-1}v_k(t_k-t_{k+1})\ge \sum_{k=j}^{i-1}v_i(t_k-t_{k+1})=v_i(t_j-t_i)$, rewrite $P_j-P_i\ge v_i(t_j-t_i)$ as $v_it_i-P_i\ge v_it_j-P_j$, which means $i$ would not envy buyer $j$.
\end{itemize}
Now let's calculate $\sum_{i=1}^nP_i$ based on (\ref{EF-ine-3}) using notation $t_{n+1}=0$,  one has
\begin{displaymath}
\begin{split}
\sum_{i=1}^nP_i&=\sum_{i=1}^n\left[\sum_{k=i}^{n-1}v_k(t_k-t_{k+1})+P_n\right]=\sum_{i=1}^n\sum_{k=i}^{n}v_k(t_k-t_{k+1})\\
 &=\sum_{k=1}^n\sum_{i=1}^kv_k(t_k-t_{k+1})=\sum_{k=1}^nkv_k(t_k-t_{k+1})\\
 &=\sum_{k=1}^nkv_kt_k-\sum_{k=1}^n(k-1)v_{k-1}t_k=\sum_{i=1}^n\alpha_it_i
\end{split}
\end{displaymath}
We know the  revenue maximizing problem of bundle envy-freeness  can be  formalized as (\ref{EF-upper}).
\end{proof}
Since consecutive envy-free solutions are a subset of (sharp) bundle envy-free solutions, hence the optimal value of optimization (\ref{EF-upper}) gives an upper bound of optimal objective value of  consecutive envy-free solutions. Noting optimization LP (\ref{EF-upper}) can be solved by dynamic programming. Let $g[s,j]$ denote the optimal objective value of the following LP with some set in $[j]$ allocated to all the buyers in $[s]$:
\begin{equation}\label{EF-sub-bundle}
\begin{aligned}
\textrm{Maximize:}\quad&\sum_{i=1}^s\alpha_it_i &\\
\textrm{s.t.}\quad & t_1\ge t_2\ge\cdots\ge t_s &\\
&T_i\subset [j], \ \ T_i\cap T_k=\emptyset \ \ \forall i,k\in [s]&
\end{aligned}
\end{equation}
Then
\begin{equation*}\label{eq:bundle}
g[s,j]=\max\left\{
\begin{array}{l}
g[s,j-1]\\\\
g[s-1,j-d]+\alpha_s\sum_{u=j-d+1}^{j}q_u
\end{array}
\right.
\end{equation*}
Next, we show how to modify the (sharp) bundle envy-free solution to consecutive envy-free solutions by properly defining the slot price of $T_i$, for all $i\in[n]$.  Suppose the optimal winner set of bundle envy-free solution is $[L]$. Assume the optimal allocation and price of bundle envy-free solution are $T_i=\{j^i_1,j^i_2,\cdots,j^i_d\}$ with $j^i_1\ge j^i_2\ge\cdots\ge j^i_d$ and  $P_i$ respectively, for all $i\in [L]$.
\begin{proof}[ of Theorem~\ref{Thm-cons-3}]
Define the price of $T_i$ iteratively as follows:\\
$p_{j^L_k}=v_Lq_{j^L_k}$, for all $k\in [d]$;\\
$p_{j^i_k}=v_i(q_{j^i_k}-q_{j^{i+1}_k})+p_{j^{i+1}_k}$ for $k\in [d]$ and $i\in [n]$\\
Now we could see that the price defined by above procedure is still a bundle envy-free solution. This is because
by induction, we have  $P_i=\sum_{k=1}^dp_{j^i_k}$. Hence, we need only to check the prices defined as above and allocations $T_i$  constitute a consecutive envy-free solution. In fact, we prove a strong version, suppose $T_i$s are consecutive from top to down in a line $S$, we will show each buyer $i$ would not envy any consecutive sub line of $S$ comprising  $d$ slots. For any $i$,\\
\textbf{Case 1},  buyer $i$ would not envy the slots below his slots. \\
for any consecutive line $T$ below $i$ with size $d$, suppose $T$
comprises of slots won by buyer $k$ (denoted such slot set by $U_k$)
and $k+1$ (denoted such slot set by $U_{k+1}$ and let $\ell=|U_{k+1}|$) where $k\ge i$. Recall that $t_i=\sum_{j\in T_i}q_j$, then
\begin{eqnarray}
&&\sum_{j\in T_i}p_j-\sum_{j\in T}p_j=v_i(t_i-t_{i+1})+P_{i+1}-\sum_{j\in U_k\cup U_{k+1}}p_j\nonumber\\
&=&v_i(t_i-t_{i+1})+v_{i+1}(t_{i+1}-t_{i+2})+\cdots+P_k-\sum_{j\in U_k\cup U_{k+1}}p_j\nonumber\\
&=&v_i(t_i-t_{i+1})+v_{i+1}(t_{i+1}-t_{i+2})+\cdots+\sum_{j\in T_k\backslash U_k}p_j-\sum_{j\in U_{k+1}}p_j\nonumber\\
&=&v_i(t_i-t_{i+1})+v_{i+1}(t_{i+1}-t_{i+2})+\cdots+\sum_{u=1}^{\ell}v_k(q_{j^k_u}-q_{j^{k+1}_u})\nonumber\\
&\le& v_i(t_i-t_{i+1})+v_{i}(t_{i+1}-t_{i+2})+\cdots+\sum_{u=1}^{\ell}v_i(q_{j^k_u}-q_{j^{k+1}_u})\nonumber\\
&=&v_it_i-v_i\sum_{j\in T}q_j.
\end{eqnarray}

 Rewrite $\sum_{j\in T_i}p_j-\sum_{j\in T}p_j\le v_it_i-v_i\sum_{j\in T}q_j$\\
as \quad\quad\quad$v_it_i-\sum_{j\in T_i}p_j\ge v_i\sum_{j\in T}q_j-\sum_{j\in T}p_j$\\
we get the desired result.\\
\textbf{Case 2}, buyer $i$ would not envy the slots above his slots.\\
 for any consecutive line $T$ above $i$ with size  $d$, suppose $T$
 comprises of slots won by buyer $k$ (denoted such slot set by $U_k$)
 and $k-1$ (denoted such slot set by $U_{k-1}$ and let $\ell=|U_{k-1}|$) where $k\le i$. Recall that $t_i=\sum_{j\in T_i}q_j$, then
\begin{eqnarray}
&&\sum_{j\in T}p_j-\sum_{j\in T_i}p_j=\sum_{j\in U_{k-1}\cup U_{k}}p_j-\sum_{j\in T_i}p_j\nonumber\\
&=&\sum_{u=d-\ell+1}^dv_{k-1}(q_{j^{k-1}_u}-q_{j^k_u})+\sum_{j\in T_k}p_j-\sum_{j\in T_i}p_j\nonumber\\
&=&\sum_{u=d-\ell+1}^dv_{k-1}(q_{j^{k-1}_u}-q_{j^k_u})+v_k(t_k-t_{k+1})+\cdots+v_{i-1}(t_{i-1}-t_i)\nonumber\\
&\ge& \sum_{u=d-\ell+1}^dv_{i}(q_{j^{k-1}_u}-q_{j^k_u})+v_i(t_k-t_{k+1})+\cdots+v_{i}(t_{i-1}-t_i)\nonumber\\
&=&v_i\sum_{j\in T}q_j-v_it_i.
\end{eqnarray}
 Rewrite $\sum_{j\in T}p_j-\sum_{j\in T_i}p_j\ge v_i\sum_{j\in T}q_j -v_it_i$\\
as \quad\quad\quad$v_it_i-\sum_{j\in T_i}p_j\ge v_i\sum_{j\in T}q_j-\sum_{j\in T}p_j$\\
we get the desired result.\\
\end{proof}
\section{Simulation}\label{sec:simulation}
Since the consecutive model has a direct application for rich media advertisement, the simulation for comparing the schemes e.g. Bayesian optimal mechanism (Bayesian for simplicity in this chapter), consecutive CE (CE for simplicity in this chapter), consecutive EF (EF for simplicity in this chapter), generalized GSP, will be presented in this chapter. Our simulation shows a convincing result for these schemes. We did a simulation to compare the expected revenue among those pricing schemes. The sampling method is applied to the competitive equilibrium, envy-free solution, Bayesian truthful mechanism, as well as the generalized GSP, which is the widely used pricing scheme for text ads in most advertisement platforms nowadays.

The value samples $v$ come from the same uniform distribution $U[20,80]$. With a random number generator, we produced 200 group samples $\{V_1,V_2,\cdots,V_{200}\}$, they will be used as the input of our simulation. Each group contains $n$ samples, e.g. $V_k=\{v_k^1,v_k^2,\cdots,v_k^n\}$, where each $v^i_k$ is sampled from uniform distribution $U[20,80]$.  For the parameters of slots, we assume there are 6 slots to be sold, and fix their position qualities:
\begin{eqnarray}
Q&=&\{q_1,~~q_2,~~q_3,~~q_4,~~q_5,~~q_6\}\nonumber\\
&=&\{0.8,~0.7,~0.6,~0.5,~0.4,~0.3\}\nonumber\\
\end{eqnarray}

The actual ads auction is complicated, but we simplified it in our simulation, we do not consider richer conditions, such as set all bidders' budgets unlimited, and there is no reserve prices in all mechanisms. We vary the group size $n$ from 5 to 12, and observe their expected revenue variation. From $j=1$ to $j=200$, at each $j$, invoke the function EF ($V_j,D,Q$), GSP ($V_j,D,Q$), CE ($V_j,D,Q$) and Bayesian ($V_j,D,Q$) respectively. Thus, those mechanisms use the same data from the same distribution as inputs and compare their expected revenue fairly. Finally, we average those results from different mechanisms respectively, and compare their expected revenue at sample size $n$.

The generalized GSP mechanism for rich ads in the simulation was not introduced in the previous sections. Here, in our simulation, it is  a simple generalization of the standard GSP which is used in keywords auction. In our generalization of GSP, the allocations of bidders are given by maximizing the total social welfare, which is compatible with GSP in keywords auction, and each winner's price per quality is given by the next highest bidder's bid per quality. Since the real generalization of GSP for rich ads is unknown and the generalization form may be various, our generalization of GSP for rich ads may not be a revenue maximizing one, however, it is a natural one.  The pseudo-codes are listed in Appendix \ref{app:code}.

Incentive analysis is also considered in our simulation, except Bayesian mechanism (it is truthful bidding, $b_i=v_i$). Since the bidding strategies in other mechanisms (GSP, CE, EF) are unclear, we present a simple bidding strategy for bidders to converge to an equilibrium. We try to find the equilibrium bids by searching the bidder's possible bids ($b_i<v_i$) one by one, from top rank bidders to lower rank bidders iteratively, until reaching an equilibrium where no one would like to change his bid. If any equilibrium exists, we count the expected revenue for this sample; if not, we ignore this sample. All the pseudo-codes are listed in Appendix \ref{app:code}.

Since the Envy-Free solution in our paper only works for the condition that all the bidders have the same demand, thus, we did the simulation in 2 separate ways:
\begin{enumerate}
\item Simulation I is for bidders with a fixed demands, we set $d_i=2$, for all $i$ and compares expected revenues obtained by GSP, CE, EF, Bayesian.
\item Simulation II is for bidders with different demands and compares expected revenues obtained by GSP, CE, Bayesian.
\end{enumerate}
\begin{figure}[!h]
\centering
\includegraphics[width=.75\textwidth]{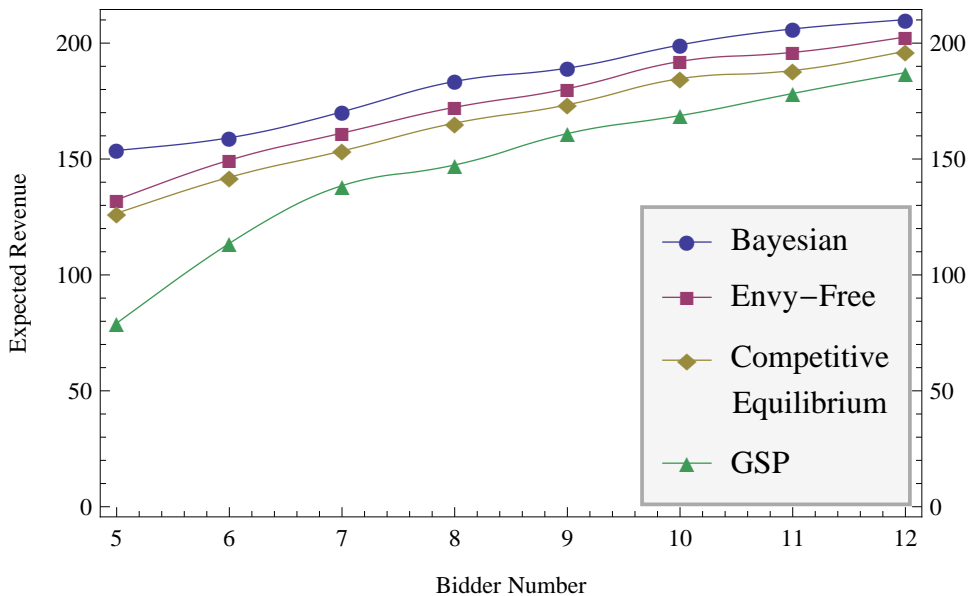}
\caption{Simulation results from different mechanisms, all bidders' demand fixed at $d_i=2$}
\label{fig:simu1}
\end{figure}

Figure \ref{fig:simu1} shows  I's results when all bidders' demand fixed at 2. Obviously, the expected revenue is increasing when more bidders involved. When the bidders' number rises, the rank of expected revenue of different mechanisms remains the same in the order Bayesian $>$ EF $>$ CE $>$ GSP.

Simulation II is for bidders with various demands. With loss of generality, we assume that bidder's demand $D=\{d_1,d_2,\cdots,d_i\},d_i\in\{1,2,3\}$, we assign those bidders' demand randomly, with equal probability.
\begin{figure}[!h]
\centering
\includegraphics[width=.75\textwidth]{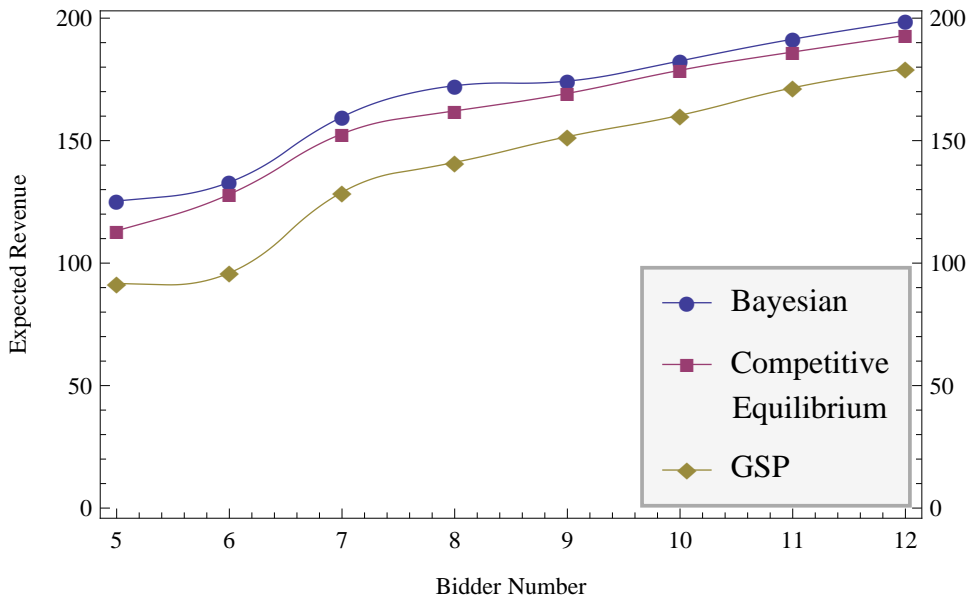}
\caption{Simulation results from different mechanisms, bidders' demand varies in \{1,2,3\}}
\label{fig:simu2}
\end{figure}

Figure \ref{fig:simu2} shows our simulation results for II when bidders' demand varies in \{1,2,3\}, the rank of expected revenue of different mechanisms remains the same as simulation I, From this chart, we can see that Bayesian truthful mechanism and competitive equilibrium get more revenues than generalized GSP.

\section{Conclusion and Discussion}

The rich media pricing models for consecutive demand buyers in the context
of Bayesian truthfulness, competitive equilibrium and envy-free solution paradigm are investigated in this paper.
As a result, an optimal Bayesian incentive compatible  mechanism is proposed for various
settings such as single peak and multiple  peaks.
In addition, to incorporate fairness e.g. envy-freeness, we also present a polynomial-time
algorithm to decide whether or not there exists a competitive equilibrium or and to compute
a revenue maximized market equilibrium if one does exist. For envy-free settings, though the revenue maximization of general demand case is shown to be NP-hard, we still provide optimal solution of common demand case. Besides, our simulation shows a reasonable relationship of revenues  among these schemes plus a generalized GSP for rich media ads.

Even though our main motivation arises from the rich media advert pricing problem,
our models have other potential applications. For example TV ads can also be modeled
under our consecutive demand adverts where inventories of a commercial break
are usually divided into slots of fixed sizes, and slots have various qualities measuring their
expected number of viewers and corresponding attractiveness.
With an extra effort to explore the periodicity of TV ads, we can extend our multiple peak model to
one involved with cyclic multiple peaks.
Besides single consecutive demand where each buyer only have one demand choice,
the buyer may have more options to display his ads, for example select a large picture
or a small one to display them. Our dynamic programming algorithm (\ref{equation:dp}) can also
be applied to this case (the transition function in each step selects maximum value
from $2k+1$ possible values, where $k$ is the number of choices of the buyer).

Another reasonable extension of our model would be to add budget constraints for buyers,
i.e., each buyer cannot afford the payment more than his budget.
By relaxing the requirement of Bayesian incentive compatible (BIC) to one of approximate BIC,
this extension can be obtained by the recent milestone work of Cai et al.~\cite{CDW12-2}.
It remains an open problem how to do it under the exact BIC requirement.
It would also be interesting to handle it under the market equilibrium paradigm for our model.

\newpage
\section*{Appendix}

\section{Pseudo-code of Simulation}\label{app:code}
\subsection{Expected Revenue for Bayesian Truthful Mechanism}
Suppose with loss of generality, $b_1>b_2>\ldots>b_n>10$, and $q_1>q_2>\ldots>q_n$, let $\phi_i(v_i)=2v_i-b_i-10$.
\begin{algorithm}[!h]
\SetAlgoLined
\caption{Bayesian Expected Revenue}
\label{alg:BER}
\KwIn{Demands $d_i$, qualities(CTR) $q_j$ and bids $b_i$, number of samples $K$}
\KwOut{Expected Revenue R}
Generate uniform distribution for $b_i$ as $\mathbf{I}_i$ uniformly distributed on $I_i=[b_i-10,b_i+10]$\;
Repeat \;

\For {$r=1,2,\cdots,K$}{
  Generate $v^r_i$ from $\mathbf{I}_i$ independently, $i=1,2,\cdots,n$\;
  Calculate $\phi_i(v^r_i)$ and sort it decreasing  order as $\phi'_i(v^r_i)>\phi'_{i+1}(v^r_i)$, $i=1,2,\cdots,n$\;
  Use dynamic programming
        \begin{equation}
\label{eq:dp}
g[s,r]=\max\left\{
\begin{array}{l}
g[s-1,r]\\\\
g[s-1,r-d_s]+\phi'_s(v^r_s)\sum_{j=r-d_s+1}^{r}q_j\\
\end{array}
\right.
\end{equation}
By tracking dynamic programming find allocation $X_i$;\\
Calculate $R^r=\sum_i\phi_i(v_i^r)\sum_{j\in X_i}q_j$\\
}
return $R=\frac{1}{K}\sum_{r=1}^KR^r$\;
\end{algorithm}

\newpage

The following the sub algorithm for finding the allocations $X_i$ when $\phi_i$, $i=1,2,\cdots,n$ are known.
\begin{algorithm}[!h]
\SetAlgoLined
\caption{sharp}
\label{alg:sharp}
\KwIn{virtual surplus $\phi_i$ qualities $q_j$ }
\KwOut{Allocation $x_{ij}$}
Sort buyers $i$ in decreasing order of $\phi_i$\;
$g[i,j]\leftarrow -\infty$; $g[0,0]\leftarrow 0$\;
$u[i,j]\leftarrow 0$; $x_{ij}\leftarrow 0$\;
\For {each buyer $i$ with positive $\phi_i$}{
  \For {each item $j$}{
    $tmp \leftarrow g[i-1,j-d_i]+\sum_{k=j-d_i+1}^{j}\phi_iq_k$\;
    $g[i,j]\leftarrow g[i-1,j]$\;
    \If{$g[i,j]<tmp$}{
      $u[i,j]\leftarrow 1$\;
      $g[i,j]\leftarrow tmp$\;
    }
  }
}
$g[i^*,j^*]=\max_{i,j}\{g[i,j]\}$\;
\While {$i^*>0$}{
  \If {$u[i^*,j^*]=1$}{
    \For {each item $k$ from $j^*-d_{i^*}+1$ to $j^*$}{
      $x_{i^*,k}\leftarrow 1$\;
    }
    $j^*\leftarrow j^*-d_{i^*}$\;
  }
  $i^*\leftarrow i^*-1$\;
}
return $x$\;
\end{algorithm}

\newpage

\subsection{Revenue from Competitive Equilibrium}
Suppose $q_1\ge q_2\ge q_3\ge \cdots\ge q_n$
\begin{algorithm}[!h]
\SetAlgoLined
\caption{Sub-algorithm for CE denoted by CE(d,q,b)}
\label{alg:CE}
\KwIn{Demands $d_i$, qualities(CTR) $q_j$ and bids $b_i$}
\KwOut{Equilibrium $(\mathbf{X}$,$\mathbf{p})$}
Sort the bids $b_i$ in decreasing order e.g. $b_1>b_2>\cdots>b_n$\;
Use dynamic programming
\begin{equation}
\label{eq:ce}
g[s,r]=\max\left\{
\begin{array}{l}
g[s-1,r]\\\\
g[s-1,r-d_s]+b_s\sum_{j=r-d_s+1}^{r}q_j\\
\end{array}
\right.
\end{equation}
By tracking dynamic programming find allocation $\mathbf{X}$;\\
Using following LP to settle price $\mathbf{p}$;\\
Let $T_i$ be any consecutive number of $d_i$ slots, for
all $i\in [n]$;
\begin{align*}
  \max \quad & \sum_{i\in [n]}\sum_{j\in X_i}p_j &  \\
  s.t.\quad  & p_j \ge 0 & \forall\ j\in [m] \\
  & p_j = 0 & \forall\ j\notin \cup_{i\in [n]}X_i   \\
  & \sum_{j\in X_i}(v_iq_j-p_j)\ge \sum_{j'\in T_i}(v_iq_{j'}-p_{j'}) & \forall\ i\in [n] \\
  & \sum_{j\in X_i}(v_iq_j-p_j)\ge 0 &\forall i\in[n]
\end{align*}
\If{LP has a feasible solution}{
return $(\mathbf{X}$,$\mathbf{p})$}
\Else{return null\;}
\end{algorithm}

\newpage

\begin{algorithm}[!h]
\SetAlgoLined
\caption{Main Algorithm for CE}
\label{alg:MCE}
\KwIn{Demands $d_i$, qualities(CTR) $q_j$ and bids $b_i$, Accuracy $\epsilon$, biding times K}
\KwOut{R revenue}
$b^1_i=b_i$, $v_i=b_i$ $i=1,2,\cdots,n$.\\
invoke Sub-algorithm for CE on $(d,q,b^1)$,\\
\If{output is not null}{
Suppose the output is $(\mathbf{X}$,$\mathbf{p})$\\
calculate the utility for all $i$. e.g. $u_i=v_i\sum_{j\in X_i}q_j-\sum_{j\in X_i}p_j$}
\For {$r=1,2,\cdots,K$}{
     \For{$i=1,2,\cdots,n$}
     {
       let $M^r_i=\lfloor b^r_i/\epsilon\rfloor$;\\
       \For {$t^r_i=\epsilon,2\epsilon,\cdots,M^r_i*\epsilon$}{
             invoke Sub-algorithm for CE on input $(d,q,(t^r_i,b^r_{-i}))$\\
             \If{the output is not null}{Suppose the output is $(\mathbf{X}$,$\mathbf{p})$\\
             Calculate the current utility $u=v_i\sum_{j\in X_i}q_j-\sum_{j\in X_i}p_j$\\
                 \If{$u>u_i$}{
                  let $u_i=u$ and $b^{r+1}_i=t^r_i$, $b^{r}_i=t^r_i$.\\
                 }
                 \Else{$b_i^{r+1}=b^r_i$;}
             }

        }

    }
  $R^r=\sum_j p_j$
}
\end{algorithm}

\newpage

\subsection{Revenue from generalized GSP}
\begin{algorithm}[!h]
\SetAlgoLined
\caption{Algorithm GSP}
\label{alg:GSP}
\KwIn{Demands $d_i$, qualities(CTR) $q_j$ and bids $b_i$, Accuracy $\epsilon$, biding times K}
\KwOut{R revenue}
$b^1_i=b_i$, $v_i=b_i$ $i=1,2,\cdots,n$.\\
Suppose the allocation of GSP is $\mathbf{X}=sharp(b,q)$;\\
calculate the utility for all $i$. e.g. $u_i=v_i\sum_{j\in X_i}q_j-\sum_{j\in X_i}p_j$\\
\For {$r=1,2,\cdots,K$}{
     \For{$i=1,2,\cdots,n$}
     {
       let $M^r_i=\lfloor b^r_i/\epsilon\rfloor$;\\
       \For {$t^r_i=\epsilon,2\epsilon,\cdots,M^r_i*\epsilon$}{
           Suppose the output of GSP on $(d,q,(t^r_i,b^r_{-i}))$ is $(\mathbf{X},\mathbf{p})$\\
             Calculate the current utility $u=v_i\sum_{j\in X_i}q_j-\sum_{j\in X_i}p_j$ of bidder $i$ \\
                 \If{$u>u_i$}{
                  let $u_i=u$ and $b^{r+1}_i=t^r_i$ $b^{r}_i=t^r_i$ .\\
                 }
                   \Else{$b_i^{r+1}=b^r_i$;}
        }

   }
return $R^r=\sum_j p_j$
}

\end{algorithm}

\newpage

\subsection{Revenue from Envy-free Solution}
Suppose $q_1\ge q_2\ge q_3\ge \cdots\ge q_n$
\begin{algorithm}[!h]
\SetAlgoLined
\caption{Sub-algorithm for EF denoted by EF(d,q,b)}
\label{alg:EF}
\KwIn{Demands $d$, qualities(CTR) $q_j$ and bids $b_i$}
\KwOut{Equilibrium $(\mathbf{X}$,$\mathbf{p})$}
Sort the bids $b_i$ in decreasing order e.g. $b_1>b_2>\cdots>b_n$\;
Use dynamic programming(similar as sharp)(initial values $g[0,0]=0$, $g[1,r]=-\infty$, $r\leq d$)
\begin{equation}
\label{eq:ce}
g[s,r]=\max\left\{
\begin{array}{l}
g[s,r-1]\\\\
g[s-1,r-d]+b_s\sum_{j=r-d+1}^{r}q_j\\
\end{array}
\right.
\end{equation}
By tracking dynamic programming find allocation $\mathbf{X}$;\\
The payment of buyers are $\mathbf{P}$, where $P_i$ is the payment of buyer $i$ ;\\
$P_n=b_n\sum_{j\in X_n}q_j$, and $P_i=b_i(\sum_{j\in X_i}q_j-\sum_{j\in X_{i+1}}q_j)+P_{i+1}$ for $i=1,2,\cdots,n-1$
\end{algorithm}

\newpage
\begin{algorithm}[!h]
\SetAlgoLined
\caption{Main Algorithm for EF}
\label{alg:MEF}
\KwIn{Demands $d$, qualities(CTR) $q_j$ and bids $b_i$, Accuracy $\epsilon$, true value $v_i$, biding times $K$}
\KwOut{R revenue}
$b^1_i=b_i$, $i=1,2,\cdots,n$.\\
invoke Sub-algorithm for EF on $(d,q,b^1)$,\\
\If{output is not null}{
Suppose the output is $(\mathbf{X}$,$\mathbf{P})$\\
calculate the utility for all $i$. e.g. $u_i=v_i\sum_{j\in X_i}q_j-P_i$}
\For {$r=1,2,\cdots,K$}{
     \For{$i=1,2,\cdots,n$}
     {
       let $M^r_i=\lfloor b^r_i/\epsilon\rfloor$;\\
       \For {$t^r_i=\epsilon,2\epsilon,\cdots,M^r_i*\epsilon$}{
             invoke Sub-algorithm for EF on input $(d,q,(t^r_i,b^r_{-i}))$\\
             \If{the output is not null}{Suppose the output is $(\mathbf{X}$,$\mathbf{P})$\\
             Calculate the current utility $u=v_i\sum_{j\in X_i}q_j-P_i$\\
                 \If{$u>u_i$}{
                  let $u_i=u$ and $b^{r+1}_i=t^r_i$, $b^{r}_i=t^r_i$.\\
                 }
                 \Else{$b_i^{r+1}=b^r_i$;}
             }
             \Else{$b_i^{r+1}=b^r_i$;}

        }

    }
  $R^r=\sum_i P_i$
}
\end{algorithm}

\end{document}